\newcommand{\C}{\ensuremath{\mathbb C}}
\newcommand{\Q}{\ensuremath{\mathbb Q}}
\newcommand{\R}{\ensuremath{\mathbb R}}
\newcommand{\Z}{\ensuremath{\mathbb Z}}
\newcommand{\K}{\ensuremath{ K}}
\newcommand{\bigO}{\mathrm{O}}
\newcommand{\mon}{\mathrm{v}}
\def\closZ{\ensuremath{\mathrm{clos}_{\mathrm{Zar}}}}
\newtheorem{Problem}{PROBLEM}[section]
\newtheorem{theorem}{Theorem}[section]
\newtheorem{proposition}[theorem]{Proposition}
\newtheorem{corollary}[theorem]{Corollary}
\newtheorem{lemma}[theorem]{Lemma}
\newtheorem{remark}[theorem]{Remark}
\newenvironment{proof}{\textit{Proof. }}{\hfill $\square$}
\thanks{UPMC, Paris 6, LIP6}%
\thanks{KLMM, Academy of Mathematics and System Sciences, China}%
\in \Z[X_1, \ldots, X_n]$ is a sum of
\begin{document}
\makeRR   % cas d'un rapport de recherche
%% \makeRT % cas d'un rapport technique.
%% a partir d'ici, chacun fait comme il le souhaite

\section{Introduction}

\paragraph*{Motivation and problem statement.}
%\begin{center}
%  {\bf To rewrite -- fix the notations etc. in particular I suggest to
%    consider $f$ with integer coefficients.}
%\end{center}

Suppose $f\in \R[x_1, \ldots, x_n]$,  then $f$ is a
 sum of squares (SOS) in $\R[x_1, \ldots, x_n]$ if and only if  it can be written in the
form
\begin{equation}
f=\mon^T \cdot M  \cdot \mon,
\end{equation}
in which $\mon$ is a column vector of monomials and $M$ is a real
positive semidefinite matrix \cite[Theorem 1]{PW98} (see also \cite{CLR95}). %lzhi2 was: (Theorem 1 in \cite{PW98}).
$M$ is also called a {\it Gram matrix} for $f$.   If $M$ has
rational entries, then $f$ is a sum of squares in $\Q[x_1, \ldots,
x_n]$.

\begin{Problem}\label{sturmfel}(Sturmfels). If $f\in \Q[x_1, \ldots, x_n]$ is a sum
of squares in $\R[x_1, \ldots, x_n]$, then is $f$  also a sum of
squares in $\Q[x_1, \ldots, x_n]$?
\end{Problem}

It %lzhi2 delete is
has been pointed out  %lzhi2 delete by Hillar (see
that if there is an invertible Gram matrix for $f$, then there is a
Gram matrix for $f$ with rational entries \cite[Theorem
1.2]{Hillar08}. %lzhi2 add
 Furthermore, %lzhi2 delete according to
   if $f\in \Q[x_1, \ldots, x_n]$
is a sum of $m$ squares in $\K[x_1, \ldots, x_n]$, where $\K$ is a
totally real number field with Galois closure $L$, then $f$ is also
a sum of $ 4 m \cdot 2^{[L:\Q]+1} {[L:\Q]+1\choose 2}$ squares in
$\Q[x_1, \ldots, x_n]$ \cite[Theorem 1.4]{Hillar08}. %lzhi2 add: shows
It is interesting to see that the number of squares can be reduced
to $m$ (see \cite{Kaltofen09}).

Although no example is known of a rational polynomial having only
irrational sum of squares, a complete answer to Question
\ref{sturmfel} is not known. This is the main motivation for us to
design an algorithm to check whether a rational polynomial having a
rational sum %lzhi2 was: sums
 of squares decomposition %lzhi2 add: decomposition
 and give the rational SOS %lzhi2 was: sums of squares
representation if it does exist. By reducing this problem to
semi-definite programming, this can be done by designing an
algorithm checking if a convex semi-algebraic set contains rational
points (see \cite{PW98}).

\paragraph*{Main result.}We propose an algorithm which decides if a
{\em convex} semi-algebraic set ${\cal S}\subset \R^k$ contains
rational points (i.e. points with coordinates in $\Q^k$). In the
case where ${\cal S}\cap \Q^k$ is non-empty, a rational point in
${\cal S}$ is computed.

The semi-algebraic set ${\cal S}$ is given as the solution set of a
polynomial system of non-strict inequalities with integer
coefficients. Arithmetic operations, sign evaluations and
comparisons of two integers/rationals can be done in polynomial time
of the maximum bit length of the considered integers/rationals.

We bound the number of bit operations that the algorithm performs
with respect to the number of polynomials, their degrees and the
maximum bit length of their input coefficients; we also give upper
bounds on the bit length of the coordinates of the outputted
rational point if this situation occurs. More precisely, the main
result is as follows.

\begin{theorem}\label{thm:main}
  Consider a set of polynomials ${\cal P}=\{h_1, \ldots,
  h_s\}\subset %lzhi2 was: h_J
  \Z[Y_1, \ldots, Y_k]$, and a quantifier-free ${\cal P}$-formula
  $\Phi(Y_1, \ldots, Y_k)$ and let $D$ be an integer such that
  $\deg(h_i)\leq D$ for $1\leq i\leq s$ %lzhi2 was: $1\leq i\leq J$
  and $\sigma$ the maximum bit
  length of the coefficients of the $h_i$'s.  Let ${\cal S}\subset
  \R^k$ be the convex semi-algebraic set defined by $\Phi$.  There
  exists an algorithm which decides if ${\cal S}\cap \Q^k$ is
  non-empty within $\sigma^{\bigO(1)} (sD)^{\bigO(k^3)}$ bit
  operations. In case of non-emptiness, it returns an element of
  ${\cal S}\cap \Q^k$ whose coordinates have bit length dominated by
  %lzhi2 add: by
  $\sigma D^{\bigO(k^3)}$.
\end{theorem}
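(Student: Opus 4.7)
The strategy exploits convexity in a crucial way: because ${\cal S}$ is convex and defined over $\Q$, its affine hull is a $\Q$-rational affine subspace, and ${\cal S}\cap\Q^k$ is non-empty if and only if the relative interior of ${\cal S}$ inside this affine hull meets $\Q^k$. This suggests a recursion on the dimension of ${\cal S}$: at each step either ${\cal S}$ is full-dimensional in its current ambient space, in which case we find a rational interior point directly, or else we extract one $\Q$-rational polynomial equation satisfied on ${\cal S}$ and recurse in one fewer variable.

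To test full-dimensionality and, when necessary, produce the equation of an enclosing rational hyperplane, I would use the critical point method of real algebraic geometry, computing a sample point in every connected component of ${\cal S}$ and of generic codimension-one sections of its closure. Standard versions of this machinery return zero-dimensional parametrizations whose coefficients have bit length $\sigma D^{\bigO(k)}$ in $(sD)^{\bigO(k^2)}$ bit operations. In the full-dimensional case one finds a real algebraic sample point $x_0$ in the interior of ${\cal S}$ (for instance a critical point of the squared distance from a generic rational center to the complement of ${\cal S}$), and then rounds it to a rational $q$ by truncating a sufficiently precise decimal expansion. The precision needed for $q\in{\cal S}$ is controlled by an effective Lojasiewicz-type lower bound on the distance from $x_0$ to the boundary $\partial {\cal S}$, with exponent polynomial in $D$ and $k$; membership is then certified by evaluating the polynomials $h_i$ at~$q$.

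The main obstacle is the complexity bookkeeping. Each recursion step contributes a factor of $D^{\bigO(k^2)}$ both to the running time and to the bit length of the data propagated forward, through substitutions along the parametrization of the enclosing hyperplane. Iterating over the at most $k$ descents then yields the announced $\sigma^{\bigO(1)}(sD)^{\bigO(k^3)}$ bit-operation bound and the $\sigma D^{\bigO(k^3)}$ bound on the output bit length. I expect the hardest points to be: showing that the rational parametrization of an enclosing hyperplane preserves a workable form of convexity after substitution, so that the recursion really applies; quantifying the Lojasiewicz exponent for the distance from an interior sample to $\partial{\cal S}$ so that the rounding step respects the same budget; and controlling the height of the intermediate hyperplanes and of the rational approximation of $x_0$ so that the product over the $k$ recursive calls does not exceed the target $D^{\bigO(k^3)}$.
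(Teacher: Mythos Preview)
Your overall recursive architecture matches the paper's: test whether ${\cal S}$ is full-dimensional, handle that case directly, and otherwise pass to a hyperplane section in one fewer variable; the complexity bookkeeping you sketch ($D^{\bigO(k^2)}$ per level, $k$ levels) is also the same. But the heart of the reduction step contains a genuine error.

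You assert that because ${\cal S}$ is convex and defined over $\Q$, its affine hull is a $\Q$-rational affine subspace. This is false. Take
\[
{\cal S}=\{(y_1,y_2)\in\R^2:\ y_1^2-2y_2^2=0,\ y_1\ge 0,\ y_2\ge 0\},
\]
a convex ray $\{(t\sqrt{2},t):t\ge0\}$ defined by a $\Q$-formula, whose affine hull is the irrational line $y_1=\sqrt{2}\,y_2$. Your biconditional ``${\cal S}\cap\Q^k\neq\emptyset$ iff the relative interior meets $\Q^k$'' also fails here: the only rational point of ${\cal S}$ is the origin, which lies outside the relative interior. So the step where you ``extract one $\Q$-rational polynomial equation satisfied on ${\cal S}$'' and then eliminate a variable cannot in general be carried out with a \emph{linear} equation, and without linearity you cannot cleanly descend to $k-1$ variables while keeping the same $s$ and $D$.

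The paper resolves exactly this point, and the mechanism is the one idea your proposal is missing. It does not look for a rational hyperplane containing ${\cal S}$. Instead it uses quantifier elimination to describe the (non-empty, by convexity) set ${\cal A}\subset\R^{k+1}$ of \emph{real} hyperplanes containing ${\cal S}$, computes one such hyperplane as an algebraic point encoded by a rational parametrization $(G,G_0,\ldots,G_{k+1})$ with $G$ irreducible over $\Q$, and then reads off from the coefficients of the $G_j$ in powers of $T$ an integer vector $(\mathbf{a},\mathbf{b})\in\Z^k\times\Z$ with $\mathbf{a}\neq 0$. The key observation is number-theoretic: if $(y_1,\ldots,y_k)\in{\cal S}\cap\Q^k$, the relation $\sum_j a_j y_j=b$ becomes a polynomial identity of degree $<\deg G$ in the algebraic number $\vartheta$, and irreducibility of $G$ forces each coefficient to vanish, whence $\mathbf{a}\cdot y=\mathbf{b}$. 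One thus obtains a rational hyperplane $H$ containing ${\cal S}\cap\Q^k$ (not ${\cal S}$ itself), and recurses on the convex set ${\cal S}\cap H$, which still captures all rational points. In the example above this produces $y_1=0$ (or $y_2=0$), correctly isolating the origin. Without this Galois-flavoured extraction of a rational hyperplane from an irrational one, your recursion does not get started.
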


We use a procedure due to \cite{BPR96} performing quantifier
elimination over the reals in order to deduce from Theorem
\ref{thm:main} the following result.

\begin{corollary}\label{cor:main}
Let ${\cal S}\subset \R^k$ be a convex set defined by
$$ {\cal S}= \{ Y \in \R^k: %lzhi2 delete: \R^k
(Q_1 X^{[1]}\in \R^{n_1})\cdots(Q_\omega X^{[\omega]}\in \R^{n_\omega})\; P(Y, X^{[1]}, \ldots, X^{[\omega]})\}$$
with quantifiers $Q_i \in \{\exists, \forall\}$, where $X^{[i]}$ is a
set of $n_i$ variables, $P$ is a Boolean function of $s$ atomic
predicates $$g(Y, X^{[1]}, \ldots, X^{[\omega]})\, \Delta_i\, 0$$
where $\Delta_i\in \{>, <, =\}$ (for $i=1, \ldots, s$) and the $g_i$'s
are polynomials of degree $D$ with integer coefficients of binary size
at most $\sigma$.  There exists an algorithm which decides if ${\cal
  S}\cap \Q^k$ is non-empty within $\sigma^{\bigO(1)}
(sD)^{\bigO(k^3\Pi_{i=1}^\omega n_i)}$ bit operations. In case of
non-emptiness, it returns an element of ${\cal S}\cap \Q^k$ whose
coordinates have bit length dominated by %lzhi2 add: by
$\sigma
{D^{\bigO(k^3\Pi_{i=1}^\omega n_i)}}$.
\end{corollary}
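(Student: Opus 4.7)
The plan is to reduce to Theorem~\ref{thm:main} by first eliminating the quantifiers in the defining formula of ${\cal S}$, and then feeding the resulting quantifier-free description into the algorithm of that theorem.

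First I would apply the block-structured real quantifier elimination procedure of \cite{BPR96} to the prenex formula
$(Q_1 X^{[1]}) \cdots (Q_\omega X^{[\omega]}) P(Y, X^{[1]}, \ldots, X^{[\omega]})$.
This returns an equivalent quantifier-free ${\cal P}'$-formula $\Phi'(Y_1, \ldots, Y_k)$ with ${\cal P}' = \{h_1', \ldots, h_{s'}'\} \subset \Z[Y_1, \ldots, Y_k]$. According to the complexity analysis of \cite{BPR96}, the output parameters satisfy $s' \leq (sD)^{O(\Pi_i n_i)}$, $\deg(h_j') \leq D' = D^{O(\Pi_i n_i)}$, and the bit size of the coefficients of the $h_j'$'s is bounded by $\sigma' = \sigma\, D^{O(\Pi_i n_i)}$; the QE itself costs $\sigma^{O(1)}(sD)^{O(k\,\Pi_i n_i)}$ bit operations, which is absorbed into the target bound. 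Crucially, $\Phi'$ defines exactly the same set ${\cal S}$, which is convex \emph{by hypothesis}: convexity is a property of the set, not of the formula, so the key hypothesis of Theorem~\ref{thm:main} is preserved.

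Next I would invoke Theorem~\ref{thm:main} on the pair $({\cal P}', \Phi')$ with the new parameters $(s', D', \sigma')$. It decides whether ${\cal S} \cap \Q^k$ is non-empty within $(\sigma')^{O(1)}(s'D')^{O(k^3)}$ bit operations, and in the non-empty case returns a rational point whose coordinates have bit length at most $\sigma'(D')^{O(k^3)}$. Substituting the bounds obtained from quantifier elimination yields the complexity
\[ (\sigma')^{O(1)}(s'D')^{O(k^3)} \;=\; \sigma^{O(1)}(sD)^{O(k^3 \Pi_i n_i)}, \]
and output bit length $\sigma' (D')^{O(k^3)} = \sigma\, D^{O(k^3 \Pi_i n_i)}$, as claimed.

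The only delicate step is the careful tracking of the QE parameters: one must use the block-wise version of \cite{BPR96} so that the blow-up per quantifier block is of the form $(sD)^{O(n_i)}$, yielding the product $\Pi_i n_i$ in the exponent rather than a worse factorial-type dependence. Once the correct complexity statement from \cite{BPR96} is in hand, the composition with Theorem~\ref{thm:main} is routine bookkeeping.
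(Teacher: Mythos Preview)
Your proposal is correct and follows exactly the paper's approach: apply the block quantifier-elimination procedure of \cite{BPR96} (restated in the paper as Theorem~\ref{thm:qe}) to obtain a quantifier-free description of ${\cal S}$, then invoke Theorem~\ref{thm:main} on that description and track the parameter blow-up. The paper carries out precisely this argument in the two Remarks following the correctness proposition in Section~\ref{sec:algo} and Proposition~\ref{prop:complexite} in Section~\ref{sec:complexity}; your explicit observation that convexity is a property of the set rather than of any particular formula is exactly what makes the reduction legitimate.
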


The proof of the above results is based on quantitative and
algorithmic results for computing sampling points in semi-algebraic
sets and quantifier elimination over the reals.

It is well-known that deciding if a given polynomial $f\in \Z[X_1,
\ldots, X_n]$ of degree $d$ whose coefficients have bit length
dominated by $\tau$ is a sum of squares of polynomials in $\Q[X_1,
\ldots, X_n]$ can be reduced to a linear matrix inequality which
defines a convex semi-algebraic set (see e.g. \cite{PW98}). Applying
Theorem \ref{thm:main}, we show that there exists an algorithm
deciding if such an SOS %lzhi2 was:s.o.s
decomposition exists over the rationals
and that the coefficients of the polynomials in the decomposition
have bit length dominated by $\tau D^{\bigO(k^3)}$ with
$D={{n+d}\choose{n}}$ and $k\leq D(D+1)-{{n+2d}\choose{n}}$.
Moreover, such a decomposition can be found within $\tau^{\bigO(1)}
D^{\bigO(k^3)}$ bit operations.

\paragraph*{Prior works.}

Khachiyan and Porkolab  extended the well-known result of
\cite{Lenstra83} on the polynomial-time solvability of linear
integer programming in fixed dimension  to semidefinite integer
programming. The following proposition is given  in \cite{KhPo97,
KhPo00}.
\begin{proposition}\label{KhPo97}
Let ${\cal S}\subset \R^k$ be a convex set defined as in Corollary
\ref{cor:main}.
%$$ {\cal S}= \{ Y \in \R^k: %lzhi2 delete: \R^k
%(Q_1 X^{[1]}\in \R^{n_1})\cdots(Q_\omega X^{[\omega]}\in
%\R^{n_\omega})\; P(Y, X^{[1]}, \ldots, X^{[\omega]})\}$$ with
%quantifiers $Q_i \in \{\exists, \forall\}$, where $P$ is a Boolean
%function of $s$ atomic predicates $$g(Y, X^{[1]}, \ldots,
%X^{[\omega]})\, \Delta_i\, 0$$ where $\Delta_i\in \{>, <, =\}$ (for
%$i=1, \ldots, s$) and the $g_i$'s are polynomials of degree $D$ with
%integer coefficients of binary size at most $\ell$.
There exists an
algorithm for solving the problem $\min\{Y_k | Y=(Y_1, \ldots, Y_k)
\in {\cal S} \bigcap \Z^k \}$ in time $\ell^{\bigO(1)}
(sD)^{\bigO(k^4)
  \Pi_{i=1}^\omega\bigO(n_i)}$. In case of non-empty, then the
minimization problem has an optimal solution whose bit length is
dominated by $ \ell D^{\bigO(k^4) \Pi_{i=1}^\omega\bigO(n_i)}$.
\end{proposition}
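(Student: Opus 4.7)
My plan is to generalize Lenstra's algorithm for integer linear programming in fixed dimension to the semi-algebraic setting, following the geometry-of-numbers paradigm. The central tool is the flatness theorem: there exists a function $w(k)$ depending only on $k$ such that every convex body $C \subset \R^k$ either contains an integer point or has lattice width at most $w(k)$ in some nonzero integer direction. This yields a recursion on dimension.

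First, I would eliminate the quantifier block. Using a quantifier elimination procedure for the first-order theory of real closed fields (of the type used later in the paper to prove Corollary \ref{cor:main}), the block $(Q_1 X^{[1]})\cdots(Q_\omega X^{[\omega]})$ is removed and ${\cal S}$ is redescribed by a quantifier-free formula whose polynomials have degrees and number bounded polynomially in $(sD)$, with an overall cost whose exponent carries the multiplicative factor $\Pi_{i=1}^{\omega} \bigO(n_i)$. This accounts for that factor in the final complexity.

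Second, with ${\cal S}$ now quantifier-free and convex, I would run the Khachiyan--Porkolab recursion. One first localizes the search by computing an a priori bound $M$ (depending on $\sigma, D, k$) on the bit size of a coordinate of an optimal integer solution, using effective bounds on solutions of polynomial systems. Then the ellipsoid method, adapted to a convex set given by polynomial inequalities (the separation oracle being realized through real root counting on the boundary), produces a pair of concentric ellipsoids $E' \subset {\cal S} \cap \{Y_k \le c\} \subset E$ with ratio depending only on $k$. Either a rounded center of $E'$ already lies in $\Z^k \cap {\cal S}$, or a short vector $v$ in the dual lattice gives an integer direction in which ${\cal S}$ has width $\le w(k)$; the integer points then lie on $\bigO(1)$ hyperplanes $\{\langle v, Y\rangle = c_j\}$, and on each such hyperplane we substitute and recurse in dimension $k-1$. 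The optimization over $Y_k$ is handled by binary search on its integer values, guided by the computed bound $M$.

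The main obstacle is the bit-complexity bookkeeping. Each recursive level requires computing short lattice vectors (via LLL) of dual lattices whose generators come from supporting hyperplanes of ${\cal S}$, and these hyperplanes have irrational algebraic coefficients whose heights and degrees must be controlled in terms of $D$ and $\sigma$. One has to show that the sizes do not explode under substitution and that the width certificates remain efficiently computable. The recursion has depth $k$, each level pays an ellipsoid-method factor of $(sD)^{\bigO(k^3)}$ due to the cost of the polynomial separation oracle, and accumulating these across the $k$ levels yields the $(sD)^{\bigO(k^4)}$ bound; combining with the quantifier elimination prefactor gives the stated complexity, while the bit length of an optimal solution is obtained from the a priori bound $M$ together with the size control at each recursive step.
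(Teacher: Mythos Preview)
The paper does not prove this proposition at all: it is stated in the ``Prior works'' paragraph of the introduction and is explicitly attributed to Khachiyan and Porkolab, with the sentence ``The following proposition is given in \cite{KhPo97, KhPo00}.'' There is therefore no in-paper proof to compare against; the authors simply quote the result as background and then contrast it with their own Theorem~\ref{thm:main} and Corollary~\ref{cor:main}, which concern \emph{rational} rather than integer points and achieve the exponent $\bigO(k^3)$ instead of $\bigO(k^4)$.

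Your sketch is a faithful high-level summary of the Khachiyan--Porkolab method itself (Lenstra-type recursion via the flatness theorem, rounding via an ellipsoidal sandwich, width certificates through short dual-lattice vectors, binary search on $Y_k$, and quantifier elimination up front to absorb the $\Pi_i \bigO(n_i)$ factor). As a proof it is still only an outline: the genuinely delicate parts in \cite{KhPo97, KhPo00} are (i) producing a \emph{rational} ellipsoidal approximation of a convex semi-algebraic set with controlled bit size when the supporting hyperplanes are a priori algebraic, and (ii) tracking coefficient growth under the substitution that cuts down one dimension, so that the recursion closes. You acknowledge these as ``the main obstacle'' but do not actually carry them out; in particular your assertion that each level costs $(sD)^{\bigO(k^3)}$ for the separation oracle is plausible but would need justification. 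None of this is a flaw relative to the paper, since the paper offers no proof either---but if you intend this as a self-contained argument rather than a pointer to \cite{KhPo97, KhPo00}, those two points are where the real work lies.
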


Their algorithm was further improved by Heinz for the case of convex
minimization where the feasible region is described by quasiconvex
polynomials \cite{Heinz05}.

%lzhi2 add
Although we can apply Proposition \ref{KhPo97} directly to certify
that a given  polynomial with integer coefficients to be
non-negative for all real values of the variables by computing  a
sum of squares in $\Z[x_1, \ldots, x_n]$, the  nonnegativity of a
polynomial can be certified  if it  can be written  as a sum of
squares of polynomials in $\Q[x_1, \ldots, x_n]$. Some hybrid
symbolic-numeric algorithms have been given in \cite{PePa07, PePa08,
KLYZ08, KLYZ09} which turn a numerical sum of squares representation
of a positive polynomial into an exact rational identity.  However,
it is well known that there are plenty of polynomials which are
nonnegative but can not be written as sums of squares of
polynomials, for example, the famous Motzkin polynomial. This also
impel us to study Khachiyan and Porkolab's approach. It turns out
that by focusing on rational numbers instead of integers, we can
design an exact algorithm which decide whether a given polynomial
can be written as an SOS over the rationals and give the rational
SOS decomposition if it exists.

\paragraph*{Structure of the paper.}
Section \ref{sec:prelim} is devoted to recall the quantitative and
algorithmic results on computing sampling points in semi-algebraic
sets and quantifier elimination over the reals. Most of these results
are proved in \cite{BPR96}. Section \ref{sec:algo} is devoted to prove
the correctness of the algorithm on which Theorem \ref{thm:main} and
Corollary \ref{cor:main} rely. The complexity analysis is done in
Section \ref{sec:complexity}. In Section \ref{sec:sos}, we apply
Theorem \ref{thm:main} to prove the announced bounds on the bit length
of the rational coefficients of the decomposition into sums of squares
of a given polynomial with integer coefficients.

\paragraph*{Acknwledgments.} This work is supported by the EXACTA
grant of National Science Foundation of China (NSFC) and the French
National Research Agency (ANR). The authors thank INRIA, KLMM and the
Academy of Mathematics and System Sciences for their support.
\section{Preliminaries}\label{sec:prelim}

The algorithm on which Theorem \ref{thm:main} relies and its
complexity analysis are based on algorithmic and quantitative results
on computing sampling points in semi-algebraic sets and quantifier
elimination over the reals.

\subsection{Computing points in semi-algebraic sets}
Consider a set of polynomials ${\cal P}=\{h_1, \ldots, h_J\} \subset
\Z[Y_1, \ldots, Y_k]$, and a quantifier-free ${\cal P}$-formula
$\Phi(Y_1, \ldots, Y_k)$ (i.e. a quantifier-free formula whose atoms
is one of $h=0$, $h\neq 0$, %lzhi2 add:,
 $h>0$, $h<0$ for $h\in {\cal P}$).  Let
$D$ be an integer such that $\deg(h_i)\leq D$ for $1\leq i\leq J$
and $\ell$ the maximum bit length of the coefficients of the
$h_i$'s. We denote by ${\cal S}\subset \R^k$ the semi-algebraic set
defined by $\Phi(Y_1, \ldots, Y_k)$.

A function {\sf RealizableSignConditions} computing a set of algebraic
points having a non-empty intersection with each connected component
of semi-algebraic sets defined by sign conditions satisfied by ${\cal
  P}$ is given in \cite[Section 3]{BPR96} (see also \cite[Chapter
5]{BaPoRo06}). {F}rom this, a function {\sf SamplingPoints} computing
a set of algebraic points having a non-empty intersection with each
connected component of ${\cal S}$ is obtained. These algebraic points
are encoded by
\begin{itemize}
\item a rational parametrization
  \begin{equation*}
    G=0,
    Y_1=\frac{G_1}{G_0}, \ldots,
    Y_k=\frac{G_k}{G_0}
  \end{equation*}
  where $G, G_0, \ldots, G_k$ are polynomials in $\Z[T]$ such
  that $\deg({\rm gcd}(G, G_0))=0$ and
  $$ \text{for }1\leq i \leq k, %lzhi add: ,
  \, -1\leq \deg(G_i)\leq \deg(G)-1\text{ and }0\leq
  \deg(G_0)\leq \deg(G)-1;$$ the rational parametrization is given by
  the list ${\cal G}=(G, G_0, G_1, \ldots, G_k)$; the degree of ${\cal
    G}$ is called {\em degree of the rational parametrization} and
  $Z({\cal G})\subset \C^k$ denotes the set of complex points encoded
  by ${\cal G}$;
\item and a list ${\cal T}$ of the Thom-encodings of the real roots
  $\vartheta $ of $G$ such that $\Phi\left
    (\frac{G_1(\vartheta)}{G_0(\vartheta)}, \ldots, \frac{G_k(\vartheta)}{G_0(\vartheta)}\right )$
    %lzhi was \frac{G_1(\vartheta)}{G_0(\vartheta)}\right )$
    is true.
\end{itemize}
The bit complexity of {\sf SamplingPoints} is $\ell J^{k+1}
D^{\bigO(k)}$ and the output is such that $\deg(G)=\bigO(D)^k$ and the
bit length of the coefficients of $G, G_0, G_1, \ldots, G_k$ is
dominated by $\ell D^{\bigO(k)}$.

%lzhi??? This two paragraphs  may be omitted and state Proposition 2.1 only.
Factorizing over $\Q$ a univariate polynomial $h\in \Q[T]$ of degree
$\delta$ with rational coefficients of maximum bit length $\ell$ can
be done in $\ell^{\bigO(1)}\delta^{\bigO(1)}$ bit-operations (see
\cite{LLL82, Hoeij, Schonhage84}). Given a root $\vartheta$ of
$h$, the minimal polynomial of $\vartheta$ has coefficients of bit
length dominated by $\ell+\bigO(\delta)$ (see \cite{Mignotte}).

Consider now a root $\vartheta$ of $G$ and its minimal polynomial
$g$. Since $G$ and $ G_0$ are co-prime, one can compute $G_0^{-1}\mod
g$ to obtain a rational parametrization $(g, g_0, \ldots, g_k)$ with
integer coefficients of bit length dominated by $\ell D^{\bigO(k)}$
and
  $$ \text{for }1\leq i \leq k,\, %lzhi add: ,
  -1\leq \deg(g_i)\leq \deg(g)-1\text{ and }0\leq
  \deg(g_0)\leq \deg(g)-1$$ within a bit-complexity
  $\ell^{\bigO(1)}D^{\bigO(k)}$. This implies the following result.

%lzhi???: in Corollary 2.3 of KhPo97, they gave the rational point with an constant denominator,
% which may be easy for me to explain the remark 1.3 and we may not need the polynomial $G_0(v)$ in formula (5)
\begin{proposition}\label{prop:resolution}
  There exists a function {\sf SemiAlgebraicSolve} which takes as
  input the system $\Phi(Y_1, \ldots, Y_k)$ and computes a rational
  parametrization ${\cal G}=(G, G_0, G_1, \ldots, G_k)$ and a list
  ${\cal T}$ of Thom-encodings such that $G$ is irreducible over $\Q$,
  and ${\cal T}$ contains the encodings of the real roots $\vartheta$
  of $G$ such that $\left (\frac{G_1(\vartheta)}{G_0(\vartheta)},
    \ldots, \frac{G_k(\vartheta)}{G_0(\vartheta)}\right )\in {\cal S}.
  $ The bit length of the coefficients of $G, G_0, G_1, \ldots, G_k$
  is dominated by $\ell D^{\bigO(k)}$ and
  $\deg(G)=\bigO(D)^k$. Moreover, {\sf SemiAlgebraicSolve} requires
  $\ell^{\bigO(1)}J^{k+1}D^{\bigO(k)}$ bit operations.
\end{proposition}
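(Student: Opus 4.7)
The plan is to post-process the output of {\sf SamplingPoints} so that each resulting rational parametrization has an irreducible defining polynomial. First I would call {\sf SamplingPoints} on $\Phi$ to obtain, in $\ell J^{k+1}D^{\bigO(k)}$ bit operations, a parametrization $(G, G_0, G_1, \ldots, G_k)$ with $\deg(G) = \bigO(D)^k$ and integer coefficients of bit length dominated by $\ell D^{\bigO(k)}$, together with the Thom encodings relative to $G$ of those real roots $\vartheta$ for which the associated point lies in ${\cal S}$. This step dominates the complexity and already produces the degree and bit-length bounds we need; everything afterwards is book-keeping and algebraic normalization.

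Next, I would factor $G$ over $\Q$ into its irreducible factors $G = \prod_j g^{(j)}$. Using the univariate factorization algorithms cited above (\cite{LLL82,Hoeij,Schonhage84}) this costs $\ell^{\bigO(1)} \deg(G)^{\bigO(1)} = \ell^{\bigO(1)} D^{\bigO(k)}$ bit operations, and by Mignotte's bound the coefficients of each $g^{(j)}$ have bit length dominated by $\ell D^{\bigO(k)} + \bigO(\deg(G)) = \ell D^{\bigO(k)}$. Because $\deg(\gcd(G,G_0)) = 0$, i.e.\ $\gcd(G,G_0)=1$, a fortiori $\gcd(g^{(j)}, G_0) = 1$, so the extended Euclidean algorithm modulo $g^{(j)}$ yields $G_0^{-1} \bmod g^{(j)}$ in $\ell^{\bigO(1)} D^{\bigO(k)}$ bit operations. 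Setting $g_i^{(j)} := G_i \cdot G_0^{-1} \bmod g^{(j)}$ for $1 \leq i \leq k$, and then clearing denominators through an integer multiplier absorbed into $g_0^{(j)}$, produces the normalized parametrization $(g^{(j)}, g_0^{(j)}, g_1^{(j)}, \ldots, g_k^{(j)}) \in \Z[T]^{k+2}$ with the stated degree constraints and bit length still dominated by $\ell D^{\bigO(k)}$.

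Every real root $\vartheta$ of $G$ is a root of exactly one factor $g^{(j)}$, so its Thom encoding relative to $G$ can be translated into a Thom encoding relative to $g^{(j)}$ by determining the signs of the derivatives of $g^{(j)}$ at $\vartheta$; these are standard sign determinations at real algebraic numbers given by a rational parametrization, carried out by the Thom-encoding machinery of \cite{BPR96} within $\ell^{\bigO(1)} D^{\bigO(k)}$ bit operations. Returning, for each irreducible factor $g^{(j)}$, the pair consisting of the normalized parametrization and the Thom encodings of the corresponding real roots yields the required output, and the total cost remains $\ell^{\bigO(1)} J^{k+1} D^{\bigO(k)}$, dominated by the initial call to {\sf SamplingPoints}. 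The main technical obstacle is bookkeeping: verifying that the bit lengths of the rewritten $g_i^{(j)}$ and of the translated Thom encodings remain within the announced $\ell D^{\bigO(k)}$ bound after factorization, inversion and denominator clearing; this is precisely what Mignotte's bound and the stated bit complexity of the Bezout inversion ensure.
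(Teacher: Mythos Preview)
Your proposal is correct and follows essentially the same approach as the paper: call {\sf SamplingPoints}, factor the resulting $G$ over $\Q$ (the paper phrases this as taking the minimal polynomial of a chosen root $\vartheta$, which amounts to the same thing), use the coprimality of $G$ and $G_0$ to invert $G_0$ modulo each irreducible factor, and control the bit lengths via Mignotte's bound. You add an explicit remark about translating the Thom encodings from $G$ to its factors, which the paper leaves implicit, but otherwise the two arguments coincide.
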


\begin{remark}\label{constantdenom}

  Since $G$ and $ G_0$ are co-prime, one can compute ${G_0}^{-1} \mod G$ %lzhi2 rewrite
 in polynomial time, and the binary length of its rational coefficients
  can be bounded via subresultants, we can assume, without loss of
  generality, that the rational parametrization has a constant
  denominator:
\begin{equation}\label{rationalY}
Y=\frac{1}{q} (G_1(\vartheta), G_2(\vartheta), \ldots,
G_k(\vartheta)) \in {\cal S}, ~G(\vartheta)=0,
\end{equation}
where the bit length of $q$ and the coefficients of $G, G_1, \ldots,
G_k$ are dominated by $\ell D^{\bigO(k)}$.
\end{remark}

The above discussion leads also to the following result.
\begin{proposition}\label{prop:rat0dim}
  Let ${\cal G}, {\cal T}$ be the output of ${\sf
    SemiAlgebraicSolve}(\Phi)$, $\delta$ be the degree of $G$, and
  $\ell$ be the maximum bit length of the coefficients of the
  polynomials in ${\cal G}\cup{\cal P}$.  There exists a function {\sf
    RationalZeroDimSolve} which takes as input ${\cal G}$ and $\Phi$
  and returns a rational point $y\in Z({\cal G})$ if and only if $y\in
  {\cal S}\cap Z({\cal G})\cap \Q^k$, else it returns an empty
  list. The coordinates of these rational points have bit length
  dominated by $\ell \delta^{\bigO(1)}$ and computations are performed
  within
  $\bigO(k)\bigO(J)\ell^{\bigO(1)}\delta^{\bigO(1)}{{n+D}\choose{n}}^{\bigO(1)}$
  bit operations.
\end{proposition}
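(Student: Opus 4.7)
The plan is to exploit the irreducibility of $G$ over $\Q$ to obtain a very restrictive characterization of which roots of $G$ produce rational points, reducing the task to testing at most one candidate point.

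First I would observe that for a root $\vartheta$ of $G$ the associated encoded point is $y = (G_1(\vartheta)/G_0(\vartheta),\ldots,G_k(\vartheta)/G_0(\vartheta))$. If $\delta := \deg(G) = 1$, then the degree constraints of the parametrization force $G_0$ and each $G_i$ to be constants, so $y \in \Q^k$ is obtained by a single division and has bit length dominated by $\ell$. If $\delta \geq 2$, then $\vartheta$ has minimal polynomial $G$ over $\Q$, and $y_i \in \Q$ forces $G_i(T) - y_i G_0(T)$ to be a polynomial in $\Q[T]$ of degree at most $\delta - 1$ vanishing at $\vartheta$; by minimality, this polynomial must vanish identically, so $G_i = y_i G_0$. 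Hence when $\delta \geq 2$ a rational point in $Z(\mathcal{G})$ exists if and only if every $G_i$ is a rational scalar multiple of $G_0$, in which case all $\delta$ roots of $G$ encode the \emph{same} rational point $y = (r_1,\ldots,r_k)$ determined by $r_i G_0 = G_i$.

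The algorithm {\sf RationalZeroDimSolve} then runs as follows: if $\delta = 1$, compute the unique rational root of $G$ and evaluate $y$ directly; if $\delta \geq 2$, for each $i$ test proportionality of $G_i$ and $G_0$ over $\Q$ (for instance by comparing degrees, extracting the candidate ratio from leading coefficients, and verifying coefficientwise equality), returning the empty list as soon as one test fails and otherwise recovering the rationals $r_i$. In both branches at most one candidate $y \in \Q^k$ remains, and it is output precisely when $\Phi(y)$ evaluates to true: we evaluate each atom $h(y)$ for $h \in \mathcal{P}$, compute its sign, and then evaluate the Boolean structure of $\Phi$.

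For the complexity, the $O(k)$ proportionality checks involve polynomials of degree at most $\delta - 1$ with integer coefficients of bit length $\ell$ and cost $\ell^{O(1)} \delta^{O(1)}$ bit operations each. Evaluating an $h \in \mathcal{P}$ of degree $D$ in $k$ variables at the rational point $y$, whose coordinates have bit length dominated by $\ell \delta^{O(1)}$, takes $\binom{n+D}{n}^{O(1)} \ell^{O(1)} \delta^{O(1)}$ bit operations, so the $J$ sign evaluations contribute $O(J)\binom{n+D}{n}^{O(1)} \ell^{O(1)} \delta^{O(1)}$, yielding the announced bound. The only content-bearing step is the irreducibility argument in the first paragraph; once one notices that the degenerate case $G_i \propto G_0$ is the only way to produce a rational point when $\delta \geq 2$ and collapses the output to a single rational tuple, everything else is routine polynomial arithmetic and sign evaluation over $\Q$.
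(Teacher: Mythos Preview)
Your argument is correct and rests on the same key observation the paper uses (recorded in Remark~\ref{degree1} immediately after the proposition): because {\sf SemiAlgebraicSolve} returns $G$ irreducible over $\Q$, a root $\vartheta$ of $G$ can yield a rational image only under a severe degree constraint, after which one merely evaluates $\Phi$ at the single candidate. The paper's own justification is terser than yours---it simply asserts that $Z(\mathcal{G})\cap\Q^k\neq\emptyset$ if and only if $\deg G=1$---whereas you also handle the degenerate possibility $\delta\geq 2$ with each $G_i$ a rational multiple of $G_0$ (all roots collapsing to a single rational tuple); this extra case analysis tightens the paper's remark but does not change the algorithmic structure or the complexity bounds, so the two approaches are essentially the same.
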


\begin{remark}\label{degree1}
  According to Proposition \ref{prop:resolution}, the function {\sf
    SemiAlgebraicSolve} computes a rational parametrization ${\cal
    G}=(G, G_0, G_1, \ldots, G_k)$ such that $G$ is irreducible over
  $\Q$. Therefore a rational point $y \in Z({\cal G})$ if and only if
  $\deg(G)=1$. In order to check whether $y \in \cal S$, we only need
  to evaluate the formula $\Phi$ at $y$.
\end{remark}

  The following result is a restatement of \cite[Theorem 4.1.2]{BPR96}
  and allows us to bound the bit length of rational points in
  non-empty semi-algebraic sets defined by strict polynomial
  inequalities.
\begin{proposition}\label{prop:interior}
  Let ${\cal S}'\subset \R^k$ be a semi-algebraic set defined by a
  quantifier-free ${\cal P}$-formula whose atoms are strict
  inequalities.  Then ${\cal S}'$ contains a rational point whose
  coordinates have bit length dominated by $\ell D^{O(k)}$.
\end{proposition}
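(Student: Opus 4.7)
The plan is to combine the algebraic sampling of Proposition \ref{prop:resolution} with a quantitative perturbation argument using the fact that every atom defining $\mathcal{S}'$ is a strict inequality.

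First, I would invoke {\sf SamplingPoints} (or equivalently {\sf SemiAlgebraicSolve}) on the formula defining $\mathcal{S}'$. Assuming $\mathcal{S}'\neq\emptyset$, this produces at least one algebraic sample point $y_{0}\in\mathcal{S}'$ encoded by a rational parametrization $\mathcal{G}=(G,G_{0},G_{1},\ldots,G_{k})$ with $\deg(G)=\bigO(D)^{k}$ and coefficient bit length $\ell D^{\bigO(k)}$. In particular, each coordinate $(y_{0})_{j}=G_{j}(\vartheta)/G_{0}(\vartheta)$ lies in the real algebraic number field $\Q(\vartheta)$, whose generator $\vartheta$ has minimal polynomial of degree $\bigO(D)^{k}$ and bit size $\ell D^{\bigO(k)}$.

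Next I would quantify how strictly $y_{0}$ satisfies the atoms. For each $h_{i}\in\mathcal{P}$, the value $h_{i}(y_{0})$ is a nonzero algebraic number (because the atom $h_{i}\,\Delta_{i}\,0$ is strict and holds at $y_{0}$), and it can be expressed as a polynomial of degree at most $D$ in the coordinates of $y_{0}$, hence as an element of $\Q(\vartheta)$ whose minimal polynomial has degree and bit size still bounded by $D^{\bigO(k)}$ and $\ell D^{\bigO(k)}$. Applying a classical Cauchy/Mignotte-style lower bound for the absolute value of a nonzero algebraic number in terms of the height and degree of its minimal polynomial yields
\begin{equation*}
  |h_{i}(y_{0})|\ \geq\ 2^{-\ell D^{\bigO(k)}}\qquad\text{for every }i.
\end{equation*}
On the other hand, standard height bounds on $y_{0}$ give $\|y_{0}\|\leq 2^{\ell D^{\bigO(k)}}$, and the gradient of $h_{i}$ at any point of norm at most $M$ is bounded by $(D M)^{\bigO(D)}$, so on a neighborhood of $y_{0}$ we get a Lipschitz estimate $|h_{i}(y)-h_{i}(y_{0})|\leq 2^{\ell D^{\bigO(k)}}\|y-y_{0}\|_{\infty}$ valid uniformly in $i$.

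Finally I would construct $y_{\Q}\in\Q^{k}$ by truncating each coordinate of $y_{0}$ to its dyadic expansion with precision $2^{-N}$, where $N=\ell D^{\bigO(k)}$ is chosen so that the Lipschitz bound forces $|h_{i}(y_{\Q})-h_{i}(y_{0})|<|h_{i}(y_{0})|$ for all $i$. Such a dyadic approximation exists and has coordinates of bit length $\bigO(N)=\ell D^{\bigO(k)}$, and the computation of the first $N$ bits of an algebraic number encoded as above is routine. Since each $h_{i}$ then has the same sign at $y_{\Q}$ as at $y_{0}$, the rational point $y_{\Q}$ satisfies the same quantifier-free formula of strict inequalities, hence $y_{\Q}\in\mathcal{S}'\cap\Q^{k}$ with bit length dominated by $\ell D^{\bigO(k)}$. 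The main technical obstacle is the explicit Mignotte-type lower bound for $|h_{i}(y_{0})|$: one must make sure that applying a polynomial of degree $D$ to the parametrized coordinates does not blow up the degree or height beyond $D^{\bigO(k)}$ and $\ell D^{\bigO(k)}$, which is precisely where the degree and bit-size estimates from Proposition \ref{prop:resolution} are used.
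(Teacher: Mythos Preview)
Your proposal is correct and follows essentially the same strategy that the paper attributes to \cite[Theorem~4.1.2]{BPR96}: obtain an algebraic sample point via the routine built on {\sf RealizableSignConditions}, then exploit the strictness of the atoms to pass to a nearby rational point with the stated bit bound. The only cosmetic difference is that the cited argument carries out the perturbation in the univariate parameter (isolating $\vartheta$ among the real roots of $G$ and of the pulled-back polynomials $G_0^{D}h_i(G_1/G_0,\dots,G_k/G_0)$), whereas you truncate directly in the $y$-coordinates after an explicit Cauchy/Mignotte lower bound on $|h_i(y_0)|$ and a Lipschitz estimate; both routes give the same $\ell D^{\bigO(k)}$ bound.
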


The proof of the above result (see \cite[Proof of Theorem 4.1.2
pp. 1032]{BPR96}) is based on the routine {\sf
  RealizableSignConditions} and the isolation of real roots of
univariate polynomials with rational coefficients (see
e.g. \cite[Chapter 10]{BaPoRo06}). We denote by {\sf
  RationalOpenSemiAlgebraicSolve} a function taking as input the
${\cal P}$-formula $\Phi$ and which returns a rational point in ${\cal
  S}$ if and only if there exists a non-empty semi-algebraic set
${\cal S}'$ defined by a quantifier-free ${\cal P}$-formula whose
atoms are strict inequalities such that ${\cal S}'\subset {\cal S}$.
The result below is not stated in \cite{BPR96} but is an immediate
consequence of this proof.

\begin{corollary}\label{cor:interior}
  Suppose that there exists a quantifier-free ${\cal P}$-formula whose
  atoms are strict inequalities defining a non-empty semi-algebraic
  set ${\cal S}'\subset {\cal S}$. There exists an algorithm computing
  a rational point in ${\cal S}$ if and only if ${\cal S}\neq
  \emptyset$. It requires $\ell^{\bigO(1)} J^{k+1} D^{\bigO(k)}$ bit
  operations and if a rational point is outputted, its coordinates
  have bit length dominated by $\ell D^{\bigO(k)}$.
\end{corollary}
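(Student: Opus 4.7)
The plan is to build the algorithm on top of \textsf{RealizableSignConditions} applied to $\mathcal{P}$ itself: enumerate sample points for all realizable sign conditions, isolate those associated with \emph{strict} sign conditions (i.e., points at which every $h_i$ is nonzero) that satisfy $\Phi$, and then apply the univariate isolation routine underlying Proposition \ref{prop:interior} to extract a rational point inside the corresponding open cell. The key observation is that $\Phi$ depends only on the vector of signs of $h_1,\ldots,h_J$; hence any realization of a strict sign condition is either entirely contained in $\mathcal{S}$ or entirely disjoint from it, and membership can be certified by evaluating $\Phi$ at a single sample point of that realization.

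First, I would invoke \textsf{RealizableSignConditions}$(\mathcal{P})$ to obtain, within $\ell J^{k+1} D^{\bigO(k)}$ bit operations, a family of sample points (encoded by rational parametrizations with Thom encodings) meeting every non-empty sign-condition realization. For each sample point $y$ in the output, I would compute the sign of each $h_i(y)$; this only requires evaluation of univariate polynomials at algebraic numbers encoded by Thom codes, so it fits within the stated complexity. Points with some $h_i(y)=0$ are discarded. For the surviving points I evaluate the Boolean combination $\Phi$; whenever $\Phi(y)$ holds, the strict sign condition $\sigma$ realized at $y$ has a non-empty realization $\mathcal{S}_\sigma$ which is open and entirely contained in $\mathcal{S}$.

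Second, once such a $\sigma$ is identified, I would apply exactly the construction from \cite[Theorem 4.1.2]{BPR96} underlying Proposition \ref{prop:interior}: it yields, within $\ell^{\bigO(1)} J^{k+1} D^{\bigO(k)}$ bit operations, a rational point in $\mathcal{S}_\sigma$ whose coordinates have bit length $\ell D^{\bigO(k)}$. Since $\mathcal{S}_\sigma\subset \mathcal{S}$, this point belongs to $\mathcal{S}$ and satisfies the required bit-length bound.

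For correctness, the hypothesis is used exactly in the non-trivial direction. If the hypothetical $\mathcal{S}'\subset\mathcal{S}$ is non-empty, then, being defined by strict inequalities in $\mathcal{P}$, it is a union of realizations of strict sign conditions; at least one such realization is non-empty, and \textsf{RealizableSignConditions} necessarily returns a sample point in it, at which $\Phi$ holds. Thus the search in the second step always succeeds and a rational point of the required bit length is produced. Conversely, if no sample point satisfies $\Phi$ with a strict sign pattern, then no open cell of the sign-condition decomposition lies in $\mathcal{S}$; under the hypothesis this forces $\mathcal{S}'=\emptyset$, hence the hypothesis is vacuous and the answer ``empty'' for $\mathcal{S}$ is consistent with the statement. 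The main subtlety, and the point where I would be most careful, is precisely this last logical bookkeeping: the bound on the bit length of the output is only available for \emph{open} witnesses, so the argument must invoke the existence of $\mathcal{S}'$ to guarantee that whenever $\mathcal{S}$ is non-empty the algorithm actually encounters an open witness rather than a purely lower-dimensional cell.
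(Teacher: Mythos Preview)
Your proposal is correct and follows essentially the same approach the paper indicates: the paper does not give a detailed proof but explicitly says the corollary is an immediate consequence of \cite[Proof of Theorem 4.1.2]{BPR96}, which is based on \textsf{RealizableSignConditions} together with univariate real-root isolation, and your write-up spells out exactly that argument. Your observation that $\Phi$ is constant on each strict sign-condition cell is the right bridge between the sampling routine and Proposition~\ref{prop:interior}, and your complexity and bit-length accounting match the paper's claimed bounds.
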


\subsection{Quantifier elimination over the reals}
We consider now a first-order formula $F$ over the reals

$$(Q_1 X^{[1]}\in \R^{n_1})\cdots(Q_\omega X^{[\omega]}\in \R^{n_\omega})\; P(Y, X^{[1]}, \ldots, X^{[\omega]})$$
where
\begin{itemize}
\item $Y=(Y_1, \ldots, Y_k)$ is the vector of free variables;
\item each $Q_i$ ($i=1, \ldots, \omega$) is one of the quantifiers
  $\exists$ or $\forall$;
\item $P(Y, X^{[1]}, \ldots, X^{[\omega]})$ is a Boolean %lzhi was: boolean
function of
  $s$ atomic predicates $$g(Y, X^{[1]}, \ldots, X^{[\omega]})\,
  \Delta_i\, 0$$ where $\Delta_i\in \{>, <, =\}$ (for $i=1, \ldots,
  s$) and the $g_i$'s are polynomials of degree $D$ with integer
  coefficients of binary size at most $\ell$.
\end{itemize}

The following result on quantifier elimination is a restatement of
\cite[Theorem 1.3.1]{BPR96}.
\begin{theorem}\label{thm:qe}
  There exists a quantified-free formula $\Psi$
$$\bigvee_{i=1}^I\bigwedge_{j=1}^{J_i} (h_{ij}\,\Delta_{ij}\, 0)$$
(where $h_{ij}\in \Z[Y_1, \ldots, Y_k]$ and $\Delta_{ij}\in \{=, >\}$)
which is equivalent to $F$ and such that
\begin{itemize}
\item $I\leq
  s^{(k+1)\Pi_{i=1}^\omega(n_i+1)}D^{(k+1)\Pi_{i=1}^\omega\bigO(n_i)}$,
\item $J_i\leq
  s^{\Pi_{i=1}^\omega(n_i+1)}D^{\Pi_{i=1}^\omega\bigO(n_i)}$, %lzhi2 add:,
\item $\deg(h_{ij})\leq D^{\Pi_{i=1}^\omega\bigO(n_i)}$, %lzhi2 add:,
\item the bit length of the coefficients of the polynomials $h_{ij}$
  is dominated by $\ell D^{(k+1)\Pi_{i=1}^\omega\bigO(n_i)}$.
\end{itemize}
The above transformation requires $\ell
s^{(k+1)\Pi_{i=1}^\omega(n_i+1)}D^{(k+1)\Pi_{i=1}^\omega\bigO(n_i)}$
bit operations.
\end{theorem}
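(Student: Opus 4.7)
Since the statement is explicitly presented as a restatement of \cite[Theorem 1.3.1]{BPR96}, a complete proof in this paper would reduce to a citation of that source together with a check that the exponents match. I will outline the underlying algorithmic argument to motivate the bounds.

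The strategy is to eliminate the quantifier blocks from innermost ($Q_\omega X^{[\omega]}$) to outermost ($Q_1 X^{[1]}$). To eliminate a single block $Q_i X^{[i]}$, one treats the remaining variables (the free $Y$'s together with the $X^{[j]}$'s for $j < i$) as parameters and runs a parametric version of the {\sf RealizableSignConditions} routine that underlies Proposition \ref{prop:resolution}. Viewing $g_1, \ldots, g_s$ as polynomials in $X^{[i]}$ with coefficients in the parameter ring, this routine partitions the parameter space into finitely many semi-algebraic cells and, above each cell, outputs the list of sign conditions on the $g_i$'s realized on $\R^{n_i}$. Each such sign condition yields a truth value for the Boolean combination $P$; collecting those cells over which some realized sign condition makes $P$ true (when $Q_i = \exists$) or over which every realized sign condition makes $P$ true (when $Q_i = \forall$) produces a quantifier-free formula in the remaining variables equivalent to $(Q_i X^{[i]})\cdots(Q_\omega X^{[\omega]})P$.

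The quantitative bounds propagate through these $\omega$ elimination steps. A single elimination of a block of $n_i$ variables from a family of $s'$ polynomials of degree $D'$ in $k' + n_i$ variables produces at most $s'^{(k'+1)(n_i+1)} D'^{(k'+1)\bigO(n_i)}$ atomic predicates of degree $D'^{\bigO(n_i)}$ whose coefficients have bit length bounded by $\ell' D'^{(k'+1)\bigO(n_i)}$, at a cost of $\ell' s'^{(k'+1)(n_i+1)} D'^{(k'+1)\bigO(n_i)}$ bit operations. Iterating block by block, the exponents multiply to give the products $\Pi_{i=1}^\omega(n_i+1)$ and $\Pi_{i=1}^\omega \bigO(n_i)$ that appear in the statement, and the overall running time is dominated by the outermost elimination.

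The main obstacle to a self-contained argument, and the reason one should cite rather than reproduce the proof, is the parametric sign-determination procedure itself: its analysis relies on subresultant-based univariate sign computations and the manipulation of Thom-encoded real roots, and is carried out in detail over several chapters of \cite{BaPoRo06}. Since the exponents in the present statement agree with those of \cite[Theorem 1.3.1]{BPR96} up to the choice of $\bigO$ constants, no further work is needed beyond the citation.
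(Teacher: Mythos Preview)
Your proposal is correct and matches the paper's treatment: the paper gives no proof beyond stating that the theorem is a restatement of \cite[Theorem 1.3.1]{BPR96}, so your citation together with the motivating outline is entirely adequate. The sketch you provide of block-by-block elimination via parametric sign determination is faithful to the BPR argument and to the way the bounds arise.
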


In the sequel, we denote by {\sf QuantifierElimination} a function
that takes $F$ as input and returns a list $[\Psi_1, \ldots, \Psi_I]$
where the $\Psi_i's$ are the conjunctions
$$\bigwedge_{j=1}^{J_i} (h_{ij}\,\Delta_{ij}\, 0).$$
\section{Algorithm and correctness}\label{sec:algo}

\subsection{Description of the algorithm}

We use the following functions:
\begin{itemize}
\item {\sf Substitute} which takes as input a variable $Y_r\in \{Y_1,
  \ldots, Y_k\}$, a polynomial $h\in \Q[Y_1, \ldots, Y_k]$ and a
  Boolean formula $F$ and which returns a formula $\tilde{F}$ obtained
  by substituting $Y_r$ by $h$ in $F$.
\item {\sf RemoveDenominators} which takes as input a formula $F$ and
  returns a formula $\tilde{F}$ obtained by multiplying the
  polynomials in $F$ by the absolute value of the lcm of the
  denominators of their coefficients.
\end{itemize}

Consider now a rational parametrization ${\cal G}=(G, G_0, G_1,
\ldots, G_k, G_{k+1})\subset \Z[T]^{k+3}$ with $\delta=\deg(G)$. For
$0\leq i\leq \delta-1$, %lzhi was:\deg(G)-1$,
denote by $\mathbf{a}_i\in \Z^k$ the vector
of integers whose $j$-th coordinate is the coefficient of $T^i$ in
$G_j$. Similarly, for $0\leq i\leq \delta-1$, %lzhi was:\deg(G)-1$,
$\mathbf{b}_i$ denotes the coefficient of $T^i$ in $G_{k+1}$.  We
use in the sequel a function {\sf GenerateVectors} that takes as
input a rational parametrization ${\cal G}$. This function returns
the set list of couples $(\mathbf{a}_i, \mathbf{b}_i)$ for $0\leq
i\leq \delta-1$.

As in the previous section, consider now a set of polynomials ${\cal
  P}=\{h_1, \ldots, h_s\}%lzhi2 was: h_J I changed J to s since in section 4, you are using s instead of J.
  \subset \Z[Y_1, \ldots, Y_k]$, and a
quantifier-free ${\cal P}$-formula $\Phi(Y_1, \ldots, Y_k)$ and let
$D$ be an integer such that $\deg(h_i)\leq D$ for $1\leq i\leq s$ %lzhi2 was J
and $\sigma$ the maximum bit length of the coefficients of the
$h_i$'s. We denote by ${\cal S}\subset \R^k$ the semi-algebraic set
defined by $\Phi(Y_1, \ldots, Y_k)$ which is supposed to be convex.

The routine {\sf FindRationalPoints} below takes as input the formula
$\Phi(Y_1, \ldots, Y_k)$ defining ${\cal S}\subset \R^k$ and the list
of variables $[Y_1, \ldots, Y_k]$.

\medskip\noindent {\sf FindRationalPoints}($\Phi, [Y_1, \ldots, Y_k]$).
\begin{enumerate}
\item\label{step:5} Let $L={\sf RationalOpenSemiAlgebraicSolve}({\sf
    Open}(\Phi))$
\item\label{step:6} If $L$ is not empty then return L
\item\label{step:1} Let ${\cal G},{\cal T}={\sf SemiAlgebraicSolve}(\Phi)$
\item\label{step:2} If ${\cal T}$ is empty then return []
\item\label{step:3} Let $L= {\sf RationalZeroDimSolve}({\cal G}, \Phi)$
\item\label{step:4} If $L$ is not empty or $k=1$ then return L
\item\label{step:7} Else
    \begin{enumerate}
    \item\label{step:a} Let $A_1, \ldots, A_k, B$ be free variables and $\Theta$ be
      the formula $$\forall Y\in \R^k\quad A_1^2+\cdots+A_k^2>0 \wedge
      (\neg \Phi\vee A_1Y_1+\cdots+A_kY_k=B)$$
    \item\label{step:b} Let $[\Psi_1, \ldots, \Psi_I]={\sf
        QuantifierElimination}(\Theta)$ and $i=1$
    \item\label{step:while} While $i\leq I$ do
      \begin{enumerate}
      \item\label{step:c} ${\cal G}, {\cal T}={\sf
          SemiAlgebraicSolve}(\Psi_i)$ and $(G, G_0, G_1, \ldots, G_k,
        G_{k+1})={\cal G}$
      \item If ${\cal T}$ is empty $i=i+1$ else break. %lzhi2 was:if
      \end{enumerate}
      \item\label{step:d} Let $C={\sf GenerateVectors}(G, G_0, G_1,
        \ldots, G_{k}, G_{k+1})$
      \item\label{step:e} Let ${a}=({a}_1, \ldots, {a}_k)\neq (0,
        \ldots, 0)$ and ${b}\in \Z$ such that $({a}, {b})\in C$
      \item\label{step:f} Let $r=\max(i, 1\leq i\leq k\text{ and }{a}_i\neq 0)$
      \item\label{step:g} Let $h={b}-\frac{\sum_{j=1}^{r-1}{a}_iY_i}{{a}_r}$
      \item\label{step:h} Let $\Phi'={\sf RemoveDenominators}({\sf
          Substitute}(Y_r, h, \Phi))$
      \item\label{step:i} Let $L={\sf FindRationalPoints}(\Phi', [Y_1,
        \ldots, Y_{r-1}, Y_{r+1}, \ldots, Y_k])$
    \item\label{step:j} If $L$ is not empty, %lzhi2 was:if
      \begin{enumerate}
      \item Let $(q_1, \ldots, q_{r-1}, q_{r+1}, %lzhi2 was:let
      \ldots, q_k)$ be its element;
    \item \label{step:k} Let $q_r={\sf Evaluate}(\{Y_i=q_i, \,1\leq
      i\leq k,\, j\neq r\}, h)$
     \item\label{step:l} if $\Phi(q_1, \ldots,q_{r-1}, q_r,
      q_{r+1},\ldots , q_k)$ is true, return $[(q_1, \ldots, q_{r-1},
      q_r, q_{r+1},\ldots, q_k)]$ else return [].
     \end{enumerate}
    \item Else return []. %lzhi2 was:else
    \end{enumerate}

\end{enumerate}

\begin{proposition}
  The algorithm {\sf FindRationalPoints} returns a list containing a
  rational point if and only if ${\cal S}\cap \Q^k$ is non-empty, %lzhi delete ,
  else it returns an empty list.
\end{proposition}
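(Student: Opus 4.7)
The plan is induction on $k$. The base case $k=1$ handles a convex subset of $\R$, which is empty, a single point, or an interval of positive length; the three cases are resolved by Steps~2, 4, 6 via Corollary~\ref{cor:interior}, the test ${\cal T}=\emptyset$ (Proposition~\ref{prop:resolution}), and Proposition~\ref{prop:rat0dim} respectively.

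For the inductive step, the early exits at Steps~2, 4, 6 remain correct by the specifications of the subroutines. The work is concentrated in Step~7. Reaching Step~7 means that ${\sf Open}(\Phi)$ defines an empty set (Corollary~\ref{cor:interior}) and no rational sampling point of ${\cal S}$ has been found. Combined with convexity, the emptiness of ${\sf Open}(\Phi)$ forces ${\cal S}$ to lie in a proper affine subspace of $\R^k$, so there exist $(A,B)\in\R^{k+1}$ with $A\neq 0$ and $A\cdot Y=B$ for every $Y\in{\cal S}$; hence $\Theta$ defines a non-empty $\Q$-semi-algebraic set, quantifier elimination (Step~7b) yields some $\Psi_i$ whose solution set is non-empty, and sampling (Step~7c) returns a non-trivial parametrization~${\cal G}$.

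The central technical step, which I would state as a lemma, is that every pair $(\mathbf{a}_i,\mathbf{b}_i)$ returned by ${\sf GenerateVectors}$ defines a hyperplane $\mathbf{a}_i\cdot Y=\mathbf{b}_i$ containing ${\cal S}\cap\Q^k$. Indeed, for any $Y\in{\cal S}\cap\Q^k$ the identity $A(\vartheta)\cdot Y-B(\vartheta)=0$ expands as
\[ \sum_i(\mathbf{a}_i\cdot Y-\mathbf{b}_i)\,\vartheta^i=0, \]
a $\Q$-linear combination of $1,\vartheta,\ldots,\vartheta^{\delta-1}$. These powers are $\Q$-linearly independent because $G$ is the irreducible minimal polynomial of $\vartheta$ of degree $\delta$ (Proposition~\ref{prop:resolution}), so each coefficient vanishes: $\mathbf{a}_i\cdot Y=\mathbf{b}_i$. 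The subtle point is that this hyperplane need not contain all of ${\cal S}$, only its rational points; but rational points are exactly what we are after.

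Given the lemma, Step~7e can pick $(\mathbf{a},\mathbf{b})$ with $\mathbf{a}\neq 0$, which is possible because $A(\vartheta)\neq 0$ forces some $\mathbf{a}_i\neq 0$. Step~7g then produces a $\Q$-affine expression $h$ for $Y_r$, and $\Phi':=\Phi[Y_r\mapsto h]$ defines a convex set ${\cal S}'\subset\R^{k-1}$ (preimage of a convex set under an affine map). By the lemma, the $\Q$-affine map $(Y_1,\ldots,\widehat{Y_r},\ldots,Y_k)\mapsto(Y_1,\ldots,Y_{r-1},h,Y_{r+1},\ldots,Y_k)$ restricts to a bijection ${\cal S}'\cap\Q^{k-1}\leftrightarrow{\cal S}\cap\Q^k$, giving both directions of the equivalence. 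The inductive hypothesis applies to the recursive call, and the reconstruction at Steps~7j--k followed by the safeguard evaluation at Step~7l returns the correct output. The main obstacle is pinning down the lemma above: the naive expectation that the extracted hyperplane contains all of ${\cal S}$ is false in general, while the weaker statement about ${\cal S}\cap\Q^k$ suffices and rests on the irreducibility of $G$ to force a $\Q$-linear identity in $\Q[\vartheta]$ to hold coefficient by coefficient.
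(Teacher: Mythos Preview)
Your proof is correct and follows the same approach as the paper: induction on $k$, the key lemma that irreducibility of $G$ forces $\mathbf{a}_i\cdot Y=\mathbf{b}_i$ for every rational $Y\in{\cal S}$, and reduction to a $(k-1)$-dimensional convex set via the resulting rational hyperplane. The only notable differences are that the paper proves explicitly (Lemma~\ref{lemma:crucial}) that a non-empty convex set of dimension $<k$ lies in an affine hyperplane, whereas you invoke this as a known fact, and that the paper splits the recursive step into separate empty/non-empty cases rather than using your cleaner bijection ${\cal S}'\cap\Q^{k-1}\leftrightarrow{\cal S}\cap\Q^k$ (which, incidentally, shows that the safeguard check at Step~\ref{step:l} can never actually fail).
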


The next paragraph is devoted to prove this proposition.
\begin{remark}
Let ${\cal S}\subset \R^k$ be a convex set defined by
$$ {\cal S}= \{ Y \in \R^k: \R^k (Q_1 X^{[1]}\in \R^{n_1})\cdots(Q_\omega X^{[\omega]}\in \R^{n_\omega})\; P(Y, X^{[1]}, \ldots, X^{[\omega]})\}$$
with quantifiers $Q_i \in \{\exists, \forall\}$, where $X^{[i]}$ is a
set of $n_i$ variables, $P$ is a Boolean function of $s$ atomic
predicates $$g(Y, X^{[1]}, \ldots, X^{[\omega]})\, \Delta_i\, 0$$
where $\Delta_i\in \{>, <, =\}$ (for $i=1, \ldots, s$).

Denote by $\Theta$ the quantified formula defining ${\cal S}$ and by
$[\Psi_1, \ldots, \Psi_I]$ the output of ${\sf
  QuantifierElimination}(\Theta)$. Running {\sf FindRationalPoints} on
the $\Psi_i$'s allows to decide the existence of rational points in
${\cal S}$. This proves a part of Corollary \ref{cor:main}.
\end{remark}
\subsection{Proof of correctness}

In the sequel, we denote by $\closZ({\cal S})$ its
Zariski-closure. Following \cite[Definition 2.8.1 and Proposition
2.8.2 pp. 50]{BCR}, we define the {\em dimension} of ${\cal S}$ as the
Krull dimension of the ideal associated to $\closZ({\cal S})$. By
convention, the dimension of the empty set is $-1$.

We reuse the notations introduced in the description of {\sf
  FindRationalPoints}. The proof is done by induction on $k$. Before
investigating the case $k=1$, we recall some elementary facts.

\subsection*{Preliminaries.} We start with a lemma.
\begin{lemma}\label{lemma:interior}
  Let $A\subset \R^k$ be a semi-algebraic set defined by a
  quantifier-free ${\cal P}$-formula. If $\dim(A)=k$ there exists
  $y\in \R^k$ such that for all $h\in {\cal P}$ $h(y)>0$ or $h(y)<0$.
\end{lemma}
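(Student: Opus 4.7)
The plan is to exhibit a non-empty Euclidean open subset of $A$ and then invoke the classical fact that a finite union of real hypersurfaces is nowhere dense in $\R^k$ to pick a point in this open set that avoids every zero locus $V(h)=\{y\in\R^k:h(y)=0\}$ for $h\in\mathcal{P}$.

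For the first step, I would use that the Krull-dimension definition of $\dim(A)$ adopted in the paper is, for semi-algebraic sets, equivalent to the topological characterization: $\dim(A)=k$ in $\R^k$ if and only if $A$ has non-empty Euclidean interior. This equivalence is standard in semi-algebraic geometry (see \cite[Prop.~2.8.5]{BCR}) and can also be read off a cylindrical algebraic decomposition adapted to $\mathcal{P}$: a semi-algebraic set of dimension $k$ necessarily contains at least one $k$-dimensional cell, and $k$-dimensional cells in $\R^k$ are by construction Euclidean-open. Pick any non-empty open ball $U\subseteq A$.

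For the second step, we may assume (after a trivial preprocessing) that no $h\in\mathcal{P}$ is the zero polynomial, since an identically-zero $h$ would make every atom involving it trivially true or false and can be eliminated from $\mathcal{P}$ without altering the formula. Then each $V(h)$ is a proper real algebraic subset of $\R^k$, hence of Euclidean dimension at most $k-1$ and in particular nowhere dense. The finite union $Z=\bigcup_{h\in\mathcal{P}}V(h)$ is therefore still nowhere dense, so $U\setminus Z\neq\emptyset$; any $y\in U\setminus Z$ lies in $A$ and satisfies $h(y)\neq 0$, that is $h(y)>0$ or $h(y)<0$, for every $h\in\mathcal{P}$.

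The only slightly delicate point is the passage from the algebraic dimension of $\closZ(A)$ to the existence of a Euclidean interior for $A$; this is precisely where semi-algebraicity (as opposed to mere Zariski density) is used, and it is what legitimizes the otherwise naive topological density argument that completes the proof.
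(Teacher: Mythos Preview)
Your proof is correct, and the underlying idea---that the union of the zero loci $V(h)$ is too small to cover a full-dimensional set---is the same as the paper's. However, the paper's argument is a single-line contrapositive that stays entirely in the Krull-dimension setting: if every $y\in A$ satisfied $h(y)=0$ for some $h\in\mathcal{P}$, then $A\subset\bigcup_{h\in\mathcal{P}}V(h)$, hence $\dim(A)\le\dim\bigl(\bigcup V(h)\bigr)<k$, a contradiction. Your detour through the equivalence ``$\dim(A)=k\iff A$ has non-empty Euclidean interior'' and the nowhere-density of real hypersurfaces is correct but unnecessary; the dimension inequality alone already gives the conclusion without invoking Euclidean topology or CAD cells. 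Both arguments tacitly assume that no $h\in\mathcal{P}$ is the zero polynomial (otherwise $\dim V(h)=k$ and the lemma is vacuously false), which you make explicit and the paper leaves implicit.
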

\begin{proof}
  Suppose that for all $y\in A$, there exists $h\in {\cal P}$ such
  that $h(y)=0$. Then, $A$ is contained in the union ${\cal H}$ of the
  hypersurfaces defined by $h=0 $ for $h\in {\cal P}$. Consequently,
  $\dim(A)\leq \dim({\cal H})<k$, which contradicts $\dim(A)=k$.
\end{proof}

The following lemma recalls an elementary property of convex
semi-algebraic sets of dimension $0$.

\begin{lemma}\label{lemma:point}
  Let $A\subset \R^k$ be a convex semi-algebraic set. If $\dim(A)=0$,
  then $A$ is reduced to a single point.
\end{lemma}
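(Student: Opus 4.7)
The plan is to argue by contradiction, exploiting the fact that a dimension-zero semi-algebraic set is finite, while any two-point subset of a convex set automatically contains a one-dimensional segment.

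First, I would invoke the standard fact from semi-algebraic geometry (see \cite[Proposition 2.8.2]{BCR}, already cited just above) that a semi-algebraic set of dimension $0$ is finite. Since the paper's convention makes $\dim(\emptyset) = -1$, the hypothesis $\dim(A) = 0$ in particular forces $A \neq \emptyset$, so $A$ contains at least one point. It remains to rule out the case that $A$ contains two distinct points.

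Suppose for contradiction that $p, q \in A$ with $p \neq q$. Since $A$ is convex, the whole segment $\{(1-t)p + tq : t \in [0,1]\}$ is contained in $A$. This segment is a semi-algebraic subset of $A$ of dimension $1$ (its Zariski closure is the affine line through $p$ and $q$), so by monotonicity of dimension we would get $\dim(A) \geq 1$, contradicting $\dim(A) = 0$. Hence $A$ consists of a single point.

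The argument is essentially two lines once one has in hand (i) finiteness of dimension-zero semi-algebraic sets and (ii) monotonicity of semi-algebraic dimension under inclusion. Neither is hard to justify, so there is no real obstacle; the only minor care to take is noting that the convention $\dim(\emptyset)=-1$ together with $\dim(A)=0$ rules out the empty case, so the conclusion really is a single point rather than ``at most one point''.
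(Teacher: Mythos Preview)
Your argument is correct and matches the paper's proof almost verbatim: assume two distinct points, use convexity to get the segment, observe its Zariski closure is a line so it has dimension $1$, and derive a contradiction with $\dim(A)=0$. The only difference is that your appeal to finiteness of zero-dimensional semi-algebraic sets is superfluous---you never actually use it, and the paper dispenses with it entirely---while your explicit remark that $\dim(A)=0$ forces $A\neq\emptyset$ is a small clarification the paper leaves implicit.
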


\begin{proof}
  If there exist two distinct points $y_1, y_2$ in $A$, the set
  $B=\{ty_1+(1-t)y_2, t\in [0,1]\}$ is contained in $A$. This implies
  that $\closZ(B)\subset \closZ(A)$ and consequently $\dim(B)\leq
  \dim(A)$. Since $\closZ(B)$ is the line containing $y_1$ and $y_2$,
  $\dim(B)=1$ and $\dim(A)\geq 1$ which contradicts the assumption
  $\dim(A)=0$. Our claim follows.
\end{proof}

\subsection*{Correctness when $k=1$.}

\begin{lemma}\label{lemma:k1}
  Suppose that $k=1$. Then Steps (\ref{step:5}-\ref{step:4}) return a rational
  point in ${\cal S}$ if and only if ${\cal S}\cap \Q^k\neq \emptyset$
  else an empty list is returned.
\end{lemma}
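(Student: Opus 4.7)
The plan is to proceed by a case analysis on the dimension of $\cal S$, exploiting that $k=1$ forces the convex set $\cal S \subset \R$ to be empty, a singleton, or an interval with non-empty interior. For each of these three possibilities I track what Steps \ref{step:5}--\ref{step:4} of {\sf FindRationalPoints} produce, and check that the output agrees with the desired one.

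Suppose first that $\dim(\cal S)=1$. Then $\cal S$ contains an open sub-interval, which meets $\Q$, so $\cal S \cap \Q \neq \emptyset$ and the algorithm must output a rational point. Lemma \ref{lemma:interior} provides a point $y \in \cal S$ at which every $h \in \cal P$ is non-zero; reading off the signs of the $h(y)$ yields a formula with only strict inequalities satisfied on a non-empty semi-algebraic subset of $\cal S$. This is exactly the $\cal S'$ required by Corollary \ref{cor:interior}, so the call {\sf RationalOpenSemiAlgebraicSolve}({\sf Open}($\Phi$)) in Step \ref{step:5} returns a rational point of $\cal S$, which Step \ref{step:6} hands back.

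Now suppose $\cal S$ has empty interior. Any strict-inequality subset of $\cal S$ lies in $\mathrm{int}(\cal S)=\emptyset$, hence {\sf Open}$(\Phi)$ is empty and Step \ref{step:5} returns []. The algorithm proceeds to {\sf SemiAlgebraicSolve}$(\Phi)$. If $\cal S=\emptyset$, the Thom-encoding list $\cal T$ is empty and Step \ref{step:2} correctly returns []. Otherwise $\dim(\cal S)=0$, and Lemma \ref{lemma:point} forces $\cal S=\{y\}$; then $\cal T$ encodes $y$, Step \ref{step:2} is skipped, and by Proposition \ref{prop:rat0dim} the call {\sf RationalZeroDimSolve}(${\cal G}$, $\Phi$) in Step \ref{step:3} returns $y$ if $y \in \Q$ and [] otherwise. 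Since $\cal S \cap \Q \neq \emptyset$ is equivalent to $y \in \Q$ in this sub-case, the final answer returned by Step \ref{step:4} (triggered because $k=1$) is correct.

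The main obstacle I expect is not in the case analysis but in pinning down the operator {\sf Open}: one needs {\sf Open}$(\Phi)$ to define a non-empty set whenever $\dim(\cal S)=k$ and the empty set whenever $\mathrm{int}(\cal S)=\emptyset$. The second direction is immediate from the definition of interior. For the first, Lemma \ref{lemma:interior} does the real work, provided {\sf Open} is built (e.g.\ as a disjunction over the realizable strict sign patterns of $\cal P$) in a way that accepts the point $y$ supplied by that lemma. Once this is granted, the three-case argument above goes through cleanly.
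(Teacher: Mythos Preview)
Your proof is correct and follows essentially the same three-case analysis on $\dim({\cal S})\in\{1,0,-1\}$ as the paper, invoking the same auxiliary results (Lemma~\ref{lemma:interior}, Corollary~\ref{cor:interior}, Lemma~\ref{lemma:point}, Proposition~\ref{prop:rat0dim}) at the same places. Your version is in fact slightly more careful than the paper's: you explicitly argue that Step~\ref{step:5} returns the empty list when ${\cal S}$ has empty interior, and you flag that {\sf Open} is never formally defined---both points the paper glosses over.
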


\begin{proof}
If $k=1$, the dimension of ${\cal S}$ is either $1$, %lzhi2 delete: either
$-1$ or $0$.
\begin{enumerate}
\item Suppose that ${\cal S}$ has dimension $1$. {F}rom Lemma
  \ref{lemma:interior}, there exists a non-empty semi-algebraic set
  ${\cal S}'\subset {\cal S}$ defined by a quantifier-free ${\cal
    P}$-formula whose atoms are strict inequalities. Thus ${\cal S}'$
  contains a rational point. {F}rom Corollary \ref{cor:interior}, such
  a rational point in ${\cal S}$ is outputted at Step (\ref{step:5}).
\item Suppose that ${\cal S}$ has dimension $-1$ (i.e. ${\cal S}$ is
  empty). {F}rom Proposition \ref{prop:resolution}, the list of Thom-encodings
  outputted at Step (\ref{step:1}) is empty and the empty list is
  returned at Step (\ref{step:2}).
\item Suppose that ${\cal S}$ has dimension $0$. {F}rom Lemma
  \ref{lemma:point}, ${\cal S}$ is a single point contained in
  $Z({\cal G})$. {F}rom Proposition \ref{prop:rat0dim}, this point is
  outputted at Step (\ref{step:3}) if and only if it is a rational
  point; else the empty list is outputted.
\end{enumerate}
\end{proof}

\subsection*{The case $k>1$.}

Our induction assumption is that, given a quantifier-free ${\cal
  P}'$-formula $\Phi'$ (with ${\cal P}'\subset \Z[Y_1, \ldots,
Y_{k-1}]$) defining a convex semi-algebraic set ${\cal S}'\subset
\R^{k-1}$, {\sf FindRationalPoints} returns a list containing a
rational point if and only if ${\cal S}'\cap \Q^{k-1}$ is non-empty,
else it returns an empty list.

%\subsection*{Correctness when $0<\dim({\cal S})<k$.}
%We start with a lemma.

%lzhi??? what is the difference between  y_i and  Y_i? I am quite confused about the symbol.

\begin{lemma}\label{lemma:crucial}
  Suppose that $0\leq \dim({\cal S})<k$. There exists $(a_1,\ldots,
  a_k)\in \R^k$ and $b\in \R$ such that $(a_1, \ldots,
  a_k)\neq (0, \ldots, 0)$ and
  \begin{equation}
    \label{eq:3}
    \forall (y_1,\ldots, y_k) \in
    \R^k\quad (y_1, \ldots, y_k)\in {\cal S}\Longrightarrow
    a_1y_1+\cdots+a_ky_k=b. %lzhi2 add:.
  \end{equation}
\end{lemma}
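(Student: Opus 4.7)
The plan is to exploit the affine hull of the convex set $\mathcal{S}$. Since $\dim(\mathcal{S})\ge 0$, the set $\mathcal{S}$ is non-empty; being convex, it has a well-defined affine hull $V=\mathrm{aff}(\mathcal{S})\subset\R^k$ of some affine dimension $d$. The goal is to show $d\le k-1$, because then $V$ is contained in some affine hyperplane $H=\{y\in\R^k : a_1 y_1+\cdots+a_k y_k=b\}$ with $(a_1,\ldots,a_k)\neq(0,\ldots,0)$, and the inclusion $\mathcal{S}\subseteq V\subseteq H$ is precisely (\ref{eq:3}).

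To this end I would first observe that $\mathcal{S}$ contains $d+1$ affinely independent points $y^{(0)},\ldots,y^{(d)}$, by the very definition of affine hull. By convexity, the closed simplex $\Delta=\{\sum_{i=0}^d t_i y^{(i)} : t_i\ge 0,\ \sum t_i=1\}$ is contained in $\mathcal{S}$, and its relative interior in $V$ is non-empty and open in $V$ with respect to the Euclidean topology.

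Next I would argue by contradiction: suppose $d=k$. Then $V=\R^k$, so the set $\Delta$ above has non-empty Euclidean interior in $\R^k$ and is contained in $\mathcal{S}$. A non-zero polynomial in $\R[Y_1,\ldots,Y_k]$ cannot vanish on an open ball of $\R^k$, so the Zariski closure of any Euclidean-open subset of $\R^k$ equals $\R^k$. Hence $\closZ(\mathcal{S})=\R^k$, and by the definition of dimension recalled before the lemma, $\dim(\mathcal{S})=k$, contradicting the hypothesis $\dim(\mathcal{S})<k$.

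Therefore $d\le k-1$, and since $V$ is an affine subspace of codimension at least $1$ it is contained in an affine hyperplane $H$ with equation $a_1 Y_1+\cdots+a_k Y_k=b$ and non-zero normal $(a_1,\ldots,a_k)$, which yields the claim. The only real subtlety is bridging the convex-analytic notion of affine dimension with the Krull/Zariski dimension used in the paper; this is handled by the elementary fact that a Euclidean-open subset of $\R^k$ is Zariski-dense.
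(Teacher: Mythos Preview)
Your argument is correct. Both proofs reduce to showing that $\mathcal{S}$ lies in a proper affine subspace, but the routes differ. The paper proves the stronger fact that $\closZ(\mathcal{S})\cap\R^k$ is itself an affine subspace: fixing $y_0\in\mathcal{S}$, it forms the intersection $\mathcal{U}_{y_0}$ of all hyperplanes through $y_0$ orthogonal to the lines $L_{y_0,y}$ for $y\in\mathcal{S}$, and then shows that the orthogonal complement of $\mathcal{U}_{y_0}$ coincides with $\closZ(\mathcal{S})\cap\R^k$. Your approach is more direct and more elementary: you never identify the Zariski closure, you only need that it cannot be all of $\R^k$, and you obtain this by the one-line observation that a full-dimensional convex set contains a Euclidean open simplex, hence is Zariski dense. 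What the paper's argument buys is an explicit description of the minimal affine subspace containing $\mathcal{S}$ (namely as an orthogonal complement), though this extra information is not used later; what your argument buys is brevity and a cleaner separation between the convex-geometric fact (existence of an interior simplex in the affine hull) and the algebraic fact (open sets are Zariski dense).
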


\begin{proof}
  It is sufficient to prove that $\closZ({\cal S})$ is an affine
  subspace over $\R$: in this case, there exists a real affine
  hyperplane $H$ (defined by $\sum_{i=1}^k a_iY_i=b$ for $(a_1,
  \ldots, a_k)\in \R^k\setminus (0, \ldots, 0)$ and %lzhi was: abd
  $b\in \R$) such that ${\cal S}\subset \closZ({\cal S})\subset H$.

  We prove below that $\closZ({\cal S})\cap\R^k$ is an affine subspace
  which implies that $\closZ({\cal S})$ is an affine subspace.

  {F}rom Lemma \ref{lemma:point}, if $\dim({\cal S})=0$ then ${\cal
    S}$ is a single point; thus the conclusion follows immediately.

  We suppose now that $\dim({\cal S})>0$; hence ${\cal S}$ is not
  empty and contains infinitely many points.  Consider $y_0\in {\cal
    S}$. Given $y\in \R^k\setminus \{y_0\}$, we denote by
  $L_{y_0,y}\subset \R^k$ the real line containing $y$ and $y_0$ and
  by $H_{y_0, y}\subset \R^k$ the real affine hyperplane which is
  orthogonal to $L_{y_0,y}$ and which contains $y_0$.

  Since ${\cal S}$ is convex, for all $y\in {\cal S}\setminus
  \{y_0\}$, ${\cal S}\cap L_{y,y_0}\neq\emptyset$. We consider the set
  ${\cal U}_{y_0}=\bigcap_{y\in {\cal S}\setminus\{y_0\}} H_{y_0,y}$;
  note that ${\cal U}_{y_0}$ is an affine subspace since it is the
  intersection of affine subspaces. We claim that the orthogonal of
  ${\cal U}_{y_0}$ is $\closZ({\cal S})\cap\R^k$.

  We first prove that ${\cal S}$ is contained in the orthogonal of
  ${\cal U}_{y_0}$ which implies that $\closZ({\cal S})\cap\R^k$ is
  contained in the orthogonal of ${\cal U}_{y_0}$.  By definition of
  ${\cal U}_{y_0}$, for all $u\in {\cal U}_{y_0}$ and all $y\in {\cal
    S}\setminus\{y_0\}$, the inner product of $\overrightarrow{y_0 u}$
  and $\overrightarrow{y_0, y}$ is zero.  We prove now that the
  orthogonal of ${\cal U}_{y_0}$ is contained in $\closZ({\cal
    S})\cap\R^k$. By definition, the orthogonal of ${\cal U}_{y_0}$ is
  the set of lines $L_{y,y_0}$ for $y\in {\cal S}\setminus
  \{y_0\}$. Thus, it is sufficient to prove that for all $y\in {\cal
    S}\setminus \{y_0\}$, $L_{y,y_0}$ is contained in $\closZ({\cal
    S})\cap\R^k$. For all $y\in {\cal S}\setminus \{y_0\}$, ${\cal
    S}\cap L_{y,y_0}\neq \emptyset$ because ${\cal S}$ is
  convex. Moreover, $\closZ({\cal S}\cap L_{y,y_0})\cap\R^k$ is
  $L_{y,y_0}$. Since ${\cal S}\cap L_{y,y_0}\subset {\cal S}$,
  $L_{y,y_0}$ is contained in $\closZ({\cal S})\cap\R^k$.  Our
  assertion follows.
\end{proof}

Suppose that $\dim({\cal S})=k$. Then, by Lemma \ref{lemma:interior},
${\cal S}\cap \Q^k$ is not empty and a rational point is outputted at
Step (\ref{step:6}) by Corollary \ref{cor:interior}. Suppose now that
${\cal S}$ is empty. Then, an empty list is returned at Step
(\ref{step:2}). We suppose now that ${\cal S}$ is not empty and that
no rational point is outputted at Step (\ref{step:4}). Hence, we enter
at Step (\ref{step:7}).

Remark that the formula $\Theta$ (Step (\ref{step:a})) defines the
semi-algebraic set ${\cal A}\subset\R^k\times \R$ such that
$(a_1, \ldots, a_k, b)\in {\cal A}$ if and only if
$(a_1, \ldots, a_k)\neq (0, \ldots, 0)$ and
$$\forall (y_1,\ldots, y_k) \in \R^k\quad (y_1, \ldots,
y_k)\in {\cal S}\Longrightarrow a_1y_1+\cdots+a_ky_k=b.$$ Thus, the
quantifier-free formula $\bigvee_{i=1}^I\Psi_i$ (Step
(\ref{step:b})) %lzhi2 was: ^i
defines ${\cal A}$. Note that by Lemma \ref{lemma:crucial}, ${\cal
A}$ is not empty. Hence, the loop at Step (\ref{step:while}) ends by
finding a rational parametrization ${\cal G}=(G, G_0, G_1, \ldots,
G_k, G_{k+1})$ (computed at Step (\ref{step:c})) which encodes some
points in ${\cal A}$.

{F}rom the specification of {\sf SemiAlgebraicSolve}, $G$ is
irreducible over $\Q$. Let $a=(a_1, \ldots, a_k)\in \R^k$ and $b\in
\R$ such that $(a, b)\in {\cal A}\cap Z({\cal G})$. Then, there exists
a real root $\vartheta$ of $G$ such that
\begin{equation}
  \label{eq:1}
G_0(\vartheta)\left (\begin{array}{c}
    a\\
    b
\end{array}\right )=
\sum_{i=1}^{\deg(G)-1}\vartheta^i
\left (\begin{array}{c}
\mathbf{a}_i\\
\mathbf{b}_i
\end{array}\right )
\end{equation}
where the couples $(\mathbf{a}_i, \mathbf{b}_i)\in \Z^k\times\Z$ are
those returned by {\sf GenerateVectors} (Step (\ref{step:d})). Since
${\rm gcd}(G_0, G)=1$, $G_0(\vartheta)\neq 0$. Moreover, $(a,b)\in
{\cal A}$ implies $a\neq (0, \ldots, 0)$. Note also that $(a, b)\in
{\cal A}$ implies that for all $\lambda\in \R^\star$, $(\lambda
a,\lambda b)\in {\cal A}$ since for all $(y_1, \ldots, y_k)\in {\cal
  S}$ and $\lambda\in \R^\star$
$$a_1y_1+\cdots+a_ky_k=b\Longleftrightarrow
\lambda(a_1y_1+\cdots+a_ky_k)=\lambda b$$ This proves
that $$(a^\star, b^\star)=(G_0(\vartheta)a,
G_0(\vartheta)b)\in {\cal A}\text{ and }(a^\star_1,
\ldots,a^\star_k)\neq (0, \ldots, 0).$$ Thus, there exists $i$
such that $\mathbf{a}_i\neq 0$, which implies that Step (\ref{step:e})
never fails. To end the proof of correctness, we distinguish the case
where ${\cal S}\cap \Q^k$ is empty or not.

\paragraph*{The non-empty case.}
We suppose first that ${\cal S}\cap \Q^k$ is non-empty; let $(y_1,
\ldots, y_k)\in {\cal S}\cap \Q^k$. Using (\ref{eq:1}), the linear
relation $a^\star_1y_1+\cdots+a^\star_k y_k=b^\star$ %lzhi2 was: \ldots %lzhi add: y_k
implies the algebraic relation of degree $\deg(G)-1$:
\begin{equation}
  \label{eq:2}
\sum_{i=0}^{\deg(G)-1}\vartheta^i(\sum_{j=1}^k\mathbf{a}_{i,j}y_j-\mathbf{b}_i)=0, %lzhi2 was:\mathbf{b}_j
%lzhi2 was: .
\end{equation}
where $\mathbf{a}_{i,j}$ %lzhi2 delete:(resp. $\mathbf{b}_j$) since $\mathbf{b}_j$ has only one coordinate
is the $j$-th coordinate of $\mathbf{a}_i$. %lzhi2 delete (resp. $\mathbf{b}$).
Since $G$ is irreducible, it is the minimal polynomial of
$\vartheta$; hence $\vartheta$ is an algebraic number of degree
$\deg(G)$.  Thus, (\ref{eq:2}) is equivalent to
$$\forall 0\leq i \leq \deg(G)-1, \quad
\sum_{j=1}^k\mathbf{a}_{i,j}y_j=\mathbf{b}_i. %lzhi2 was: \mathbf{b}_j
$$ %lzhi add: , and .
We previously proved that there exists $i$ such that $\mathbf{a}_i\neq
0$. We let $a=(a_1, \ldots, a_k)\in \Z^k\setminus (0, \ldots, 0)$ and
$b\in \Z$ be respectively the vector with integer coordinates and the
integer chosen in $C$ (Step (\ref{step:e})). We have just proved that
${\cal S}\cap \Q^k$ is contained in the intersection of ${\cal S}$ and
of the affine hyperplane $H$ defined by $a_1
Y_1+\cdots+a_kY_k=b$. Note also that ${\cal S}\cap H$ is convex since
${\cal S}$ is convex and $H$ is an affine hyperplane.

%lzhi???: should we change y_i to Y_i
Consider the projection $\pi_r: (y_1, \ldots, y_k)\in \R^k\rightarrow
(y_1, \ldots, y_{r-1}, y_{r+1}, \ldots, y_k)\in
\R^{k-1}$ %lzhi was: y_r
for the integer $r$ computed at Step (\ref{step:f}). It is clear that
the formula $\Phi'$ computed at Step (\ref{step:h}) defines the
semi-algebraic set $\pi_r({\cal S}\cap H)\subset \R^{k-1}$. Since
${\cal S}\cap H$ is convex, $\pi_r({\cal S}\cap H)$ is convex. Thus,
the call to {\sf FindRationalPoints} (Step (\ref{step:i})) with inputs
$\Phi'$ and $[Y_1, \ldots, Y_{r-1}, Y_{r+1}, \ldots, Y_k]$ is
valid. {F}rom the induction assumption, it returns a rational point in
$\pi_r({\cal S}\cap H)$ if and only if $\pi_r({\cal S}\cap H)$ has a
non-empty intersection with $\Q^{k-1}$.

Since ${\cal S}\cap \Q^k$ (which is supposed to be non-empty) is
contained in ${\cal S}\cap H$, $\pi_r({\cal S}\cap H)$ contains
rational points. Thus, the list $L$ (Step (\ref{step:i})) contains a
rational point $\mathbf{q}_{k-1}=(q_1, \ldots, q_{r-1}, q_{r+1},
\ldots, q_k)\in \pi_r({\cal S}\cap H)$. This implies that
$\pi_{r}^{-1}(\mathbf{q}_{k-1})\cap H$ has a non-empty intersection
with ${\cal S}\cap H$. Remark that $\pi_{r}^{-1}(\mathbf{q}_{k-1})\cap
H$ is the rational point $\mathbf{q}=(q_1, \ldots, q_{r-1}, q_r,
q_{r+1}, \ldots, q_k)$ where $q_r$ is computed at Step
(\ref{step:k}). It belongs to ${\cal S}$ since
$\pi_{r}^{-1}(\mathbf{q}_{k-1})\cap H$ and ${\cal S}\cap H$ have a
non-empty intersection. Thus, $\Phi(q_1, \ldots, q_{r-1}, q_r,
q_{r+1}, \ldots, q_k)$ is true and $\mathbf{q}$ is returned by {\sf
  FindRationalPoints}.

\paragraph*{The empty case.} Suppose now that ${\cal S}\cap \Q^k$ is
empty.  As above $H$ denotes the affine hyperplane defined by
$a_1Y_1+\cdots+a_kY_k=b$ where $(a_1, \ldots, a_k)\in \Z^k$ and
$b\in \Z$ are chosen at Step (\ref{step:e}). Using the above
argumentation, $\pi_r({\cal S}\cap H)$ is convex and the formula
$\Phi'$ (Step (\ref{step:h})) defines $\pi_r({\cal S}\cap H)$. Thus,
the call to {\sf FindRationalPoints} (Step (\ref{step:i})) with
inputs $\Phi'$ and $[Y_1, \ldots, Y_{r-1}, Y_{r+1}, \ldots, Y_k]$ is
valid. Suppose that $\pi_r({\cal S}\cap H)$ does not contain
rational points. Then, by the induction assumption, $L$ is empty and
the empty list is returned (Step (\ref{step:j})) which is the
expected output since we have supposed %lzhi was: suppose
 ${\cal S}\cap
\Q^k=\emptyset$. Else, $L$ contains a rational point $(q_1, \ldots,
q_{r-1}, q_{r+1}, \ldots, q_k)$. Consider the rational point $(q_1,
\ldots, q_{r-1}, q_r, q_{r+1}, \ldots, q_k)$ (where $q_r$ is
computed at Step (\ref{step:k})). It can not belong to ${\cal S}$
since we have
supposed ${\cal S}\cap \Q^k$ is empty. %lzhi add
Consequently, $\Phi(q_1, \ldots,
q_{r-1}, q_r, q_{r+1}, \ldots, q_k)$ is false and the empty list is
returned.

\section{Complexity}\label{sec:complexity}

We analyze now the bit complexity of {\sf FindRationalPoints}.

\begin{proposition}\label{prop:complexite}
  Consider a set of polynomials ${\cal P}=\{h_1, \ldots, h_s\}%lzhi2 was:h_J
  \subset
  \Z[Y_1, \ldots, Y_k]$, and a quantifier-free ${\cal P}$-formula
  $\Phi(Y_1, \ldots, Y_k)$ and let $D$ be an integer such that
  $\deg(h_i)\leq D$ for $1\leq i\leq s$ %lzhi2 was:J
   and $\sigma$ the maximum bit
  length of the coefficients of the $h_i$'s.  Then, ${\sf
    FindRationalPoints}(\Phi, [Y_1, \ldots, Y_k])$ requires
  $\sigma^{\bigO(1)} (sD)^{\bigO(k^3)}$ bit operations. Moreover, if
  it outputs a rational point, its coordinates have bit length
  dominated by $\sigma D^{\bigO(k^3)}$.
\end{proposition}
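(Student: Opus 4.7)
The plan is to analyze the recursive algorithm {\sf FindRationalPoints} by induction on the number $\ell$ of variables of its input. Denote by $s_\ell$, $D_\ell$, $\sigma_\ell$ the parameters of the formula handed down at level $\ell$, so that $s_k=s$, $D_k=D$ and $\sigma_k=\sigma$. The first observation is that Step~(\ref{step:h}) combines a linear substitution of one variable by a linear form with a clearing of denominators, so it leaves the number of polynomials and the degree bound unchanged: one has $s_\ell\le s$ and $D_\ell\le D$ throughout the recursion, and only the coefficient bit length $\sigma_\ell$ genuinely grows and must be tracked.

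To bound the local cost at level $\ell$, I would traverse the steps one by one, invoking Corollary~\ref{cor:interior} for Step~(\ref{step:5}), Proposition~\ref{prop:resolution} for Step~(\ref{step:1}), and Proposition~\ref{prop:rat0dim} for Step~(\ref{step:3}); all these together contribute at most $\sigma_\ell^{O(1)}(sD)^{O(\ell)}$ bit operations. The expensive block is Steps~(\ref{step:a})--(\ref{step:while}): Theorem~\ref{thm:qe} applied with $\ell+1$ free variables $(A_1,\ldots,A_\ell,B)$, one universally quantified block of $\ell$ variables, and $O(s)$ atoms returns a quantifier-free formula consisting of disjuncts $\Psi_i$ of size $s^{O(\ell^2)}D^{O(\ell^2)}$, with polynomials of degree $D^{O(\ell)}$ and coefficient bit length $\sigma_\ell D^{O(\ell^2)}$, in time $\sigma_\ell^{O(1)}(sD)^{O(\ell^2)}$. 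Feeding each such $\Psi_i$ into {\sf SemiAlgebraicSolve} in $\ell+1$ variables via Proposition~\ref{prop:resolution} is the dominant operation: its cost is $\sigma_\ell^{O(1)}(sD)^{O(\ell^3)}$ and the returned parametrization has coefficient bit length $\sigma_\ell D^{O(\ell^2)}$. Hence the integers $a_1,\ldots,a_\ell,b$ selected at Step~(\ref{step:e}) have bit length $\sigma_\ell D^{O(\ell^2)}$, and expanding the degree-$D$ polynomials in $\Phi$ along the linear form $h$ then clearing denominators yields the bit-length recurrence $\sigma_{\ell-1}\le \sigma_\ell\cdot D^{O(\ell^2)}$.

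Telescoping this recurrence gives $\sigma_\ell\le \sigma D^{O(k^3)}$ for every $\ell$, using $\sum_{j=\ell+1}^{k}j^2=O(k^3)$. Summing the per-level costs $\sigma_\ell^{O(1)}(sD)^{O(\ell^3)}$ over $\ell=1,\ldots,k$ and absorbing the factor $k$ in the exponent then yields the announced total $\sigma^{O(1)}(sD)^{O(k^3)}$. As for the output, the rational coordinate $q_r$ produced at Step~(\ref{step:k}) is obtained by evaluating a linear form with coefficients of bit length $\sigma_1=\sigma D^{O(k^3)}$ at a rational point provided at level $1$ by Lemma~\ref{lemma:k1}, so the bit length of the outputted point is also dominated by $\sigma D^{O(k^3)}$.

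The main obstacle will be the careful bookkeeping around Step~(\ref{step:c}): after quantifier elimination the data already have degree $D^{O(\ell)}$ and live in $\ell+1$ variables, so calling {\sf SemiAlgebraicSolve} on them inflates the exponent of $D$ by an extra factor of $\ell$ compared to the input of the level, producing the cubic $O(\ell^3)$ dependence that governs both the cost per level and, through the bit-length recurrence $\sigma_{\ell-1}\le \sigma_\ell D^{O(\ell^2)}$, the final $O(k^3)$ exponent in the bit length of the output.
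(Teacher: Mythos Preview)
Your proposal is correct and follows essentially the same approach as the paper: induction on the number of variables, with Corollary~\ref{cor:interior}, Propositions~\ref{prop:resolution} and~\ref{prop:rat0dim}, and Theorem~\ref{thm:qe} bounding the work at each level, and the bit-length recurrence $\sigma_{\ell-1}\le\sigma_\ell D^{O(\ell^2)}$ telescoping to $\sigma D^{O(k^3)}$. The paper packages the per-level analysis into Lemmas~\ref{lemma:initialisation} and~\ref{lemma:intermediate} and then runs a clean induction, whereas you unroll the recursion explicitly and sum; the substance is the same.

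One small slip worth noting: you claim the cost of Step~(\ref{step:c}) is $(sD)^{O(\ell^3)}$, apparently because you take each disjunct $\Psi_i$ to have $(sD)^{O(\ell^2)}$ atoms. In fact Theorem~\ref{thm:qe} gives $J_i\le (sD)^{O(\ell)}$ atoms per disjunct (it is the number $I$ of disjuncts that is $(sD)^{O(\ell^2)}$), so Proposition~\ref{prop:resolution} yields a per-level cost of only $\sigma_\ell^{O(1)}(sD)^{O(\ell^2)}$, as in the paper's Lemma~\ref{lemma:intermediate}. Your looser bound is still a valid upper bound and does not affect the final $O(k^3)$ exponent, but the tighter estimate is what makes the induction in the paper go through cleanly without having to separately telescope the time cost. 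Your description of the output bit length is also slightly imprecise (the linear form at the top level has coefficients of size $\sigma D^{O(k^2)}$, not $\sigma_1$; it is the point at which it is evaluated, coming up from the recursion, that carries the $\sigma D^{O(k^3)}$), but the conclusion is correct.
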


%lzhi2???: the convex set  ${\cal S}\subset \R^k$  has appeared 5 times, maybe we need to skip some.
\begin{remark}
Let ${\cal S}\subset \R^k$ be a convex set defined by
$$ {\cal S}= \{ Y \in \R^k: \R^k (Q_1 X^{[1]}\in \R^{n_1})\cdots(Q_\omega X^{[\omega]}\in \R^{n_\omega})\; P(Y, X^{[1]}, \ldots, X^{[\omega]})\}$$
with quantifiers $Q_i \in \{\exists, \forall\}$, where $X^{[i]}$ is a
set of $n_i$ variables, $P$ is a Boolean function of $s$ atomic
predicates $$g(Y, X^{[1]}, \ldots, X^{[\omega]})\, \Delta_i\, 0$$
where $\Delta_i\in \{>, <, =\}$ (for $i=1, \ldots, s$) and the $g_i$'s
are polynomials of degree $D$ with integer coefficients of binary size
at most $\sigma$. Denote by $\Theta$ the quantified formula defining
${\cal S}$. By Theorem \ref{thm:qe}, ${\sf
  QuantifierElimination}(\Theta)$ requires $\sigma
s^{(k+1)\Pi_{i=1}^\omega(n_i+1)}D^{(k+1)\Pi_{i=1}^\omega\bigO(n_i)}$
bit operations.

It outputs a list of conjunctions %lzhi2 was: conjonctions
 $\Phi_1, \ldots,
\Phi_I$ with $I\leq
s^{(k+1)\Pi_{i=1}^\omega(n_i+1)}D^{(k+1)\Pi_{i=1}^\omega\bigO(n_i)}$,
and for $1\leq i\leq I$, $\Phi_i $ is a conjunction %lzhi2 was:conjonction
of $J_i\leq
s^{\Pi_{i=1}^\omega(n_i+1)}D^{\Pi_{i=1}^\omega\bigO(n_i)}$ atomic
predicates $h\,\Delta\, 0$ with $h\in \Z[Y_1, \ldots, Y_k]$,
$\Delta\in \{=, >\}$ and $\deg(h)\leq
D^{\Pi_{i=1}^\omega\bigO(n_i)}$ and the bit length of the
coefficients of the polynomials $h_{ij}$ is dominated by $\sigma
D^{(k+1)\Pi_{i=1}^\omega\bigO(n_i)}$. Thus, the cost of running {\sf
FindRationalPoints} on all the $\Phi_i$'s requires
$\sigma^{\bigO(1)} (sD)^{\bigO(k^3\Pi_{i=1}^\omega n_i)}$ bit
operations. In case of non-emptiness of ${\cal S}\cap\Q^k$, it
returns an element of ${\cal S}\cap \Q^k$ whose coordinates have bit
length dominated by $\sigma {D^{\bigO(k^3\Pi_{i=1}^\omega n_i)}}$.
This ends to prove Corollary \ref{cor:main}.
\end{remark}

We start with a lemma.
\begin{lemma}\label{lemma:initialisation}
  Steps (\ref{step:5}-\ref{step:4}) of ${\sf
    FindRationalPoints}(\Phi)$ perform within $\sigma^{\bigO(1)}
  s^{k+1} D^{\bigO(k)}$ bit operations. If a rational point is
  returned at Step (\ref{step:4}) or Step (\ref{step:6}), its
  coordinates have bit length dominated by $\sigma D^{\bigO(k)}$.
\end{lemma}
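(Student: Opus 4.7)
The plan is to read the announced bounds off the three complexity statements recalled in Section~\ref{sec:prelim}, after specialising their generic input parameters $(J,\ell)$ to the values $(s,\sigma)$ of our setting, and then summing. Steps~(\ref{step:5}--\ref{step:4}) consist of exactly three nontrivial subroutine calls, interleaved with free tests on emptiness of a list, so bookkeeping is linear in these three contributions.

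First I would handle Steps~(\ref{step:5}) and~(\ref{step:6}). Here {\sf RationalOpenSemiAlgebraicSolve} is applied to ${\sf Open}(\Phi)$, which is a quantifier-free ${\cal P}$-formula built from the same $s$ polynomials of degree at most $D$ and bit length at most $\sigma$. Corollary~\ref{cor:interior} gives directly both a bit complexity of $\sigma^{\bigO(1)}s^{k+1}D^{\bigO(k)}$ and, in the case an output is produced at Step~(\ref{step:6}), a coordinate bit length bounded by $\sigma D^{\bigO(k)}$.

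Next, for Steps~(\ref{step:1}--\ref{step:2}) I would invoke Proposition~\ref{prop:resolution}: the call to {\sf SemiAlgebraicSolve} costs $\sigma^{\bigO(1)}s^{k+1}D^{\bigO(k)}$ bit operations and produces a rational parametrization ${\cal G}$ whose defining polynomial $G$ has degree $\delta=\bigO(D)^k$ and whose coefficients have bit length at most $\sigma D^{\bigO(k)}$. Testing whether ${\cal T}$ is empty is free within this envelope. For Steps~(\ref{step:3}--\ref{step:4}), I would then feed ${\cal G}$ and $\Phi$ into {\sf RationalZeroDimSolve}: Proposition~\ref{prop:rat0dim}, applied with its parameter $\ell$ set to $\sigma D^{\bigO(k)}$ (the maximum bit length of the coefficients in ${\cal G}\cup{\cal P}$) and $\delta=\bigO(D)^k$, again yields a bit complexity of the form $\sigma^{\bigO(1)}s^{k+1}D^{\bigO(k)}$, and coordinate bit length $\ell\delta^{\bigO(1)}=\sigma D^{\bigO(k)}$ for any rational point returned at Step~(\ref{step:4}).

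Summing the three complexities and taking the maximum of the three bit-length bounds gives the announced estimates. The only genuine point to check is that the combinatorial factor ${{n+D}\choose{n}}^{\bigO(1)}$ appearing in Proposition~\ref{prop:rat0dim} is harmless: read with $n$ specialised to the ambient number of variables $k$, it is bounded by $D^{\bigO(k)}$ and is therefore absorbed into the overall $D^{\bigO(k)}$ factor. Once this routine check is made, the lemma is a straightforward aggregation of the previously established bounds.
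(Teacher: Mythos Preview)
Your proof is correct and follows essentially the same approach as the paper's own argument: both proofs simply invoke Corollary~\ref{cor:interior} for Steps~(\ref{step:5}--\ref{step:6}), Proposition~\ref{prop:resolution} for Steps~(\ref{step:1}--\ref{step:2}), and Proposition~\ref{prop:rat0dim} for Steps~(\ref{step:3}--\ref{step:4}), and then aggregate. Your version is slightly more explicit in tracking the parameter substitutions $\ell=\sigma D^{\bigO(k)}$, $\delta=\bigO(D)^k$ and in checking that the binomial factor is absorbed, but the logical structure is identical.
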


\begin{proof}
  The result is a direct consequence of the results stated at Section
  \ref{sec:prelim}.
\begin{enumerate}
\item {F}rom Corollary \ref{cor:interior}, Step (\ref{step:5}) is
  performed within $\sigma s^{k+1}D^{\bigO(k)}$ bit operations and if a
  rational point is outputted at Step (\ref{step:6}), its coordinates
  have bit length dominated by $\sigma D^{\bigO(k)}$.
\item {F}rom Proposition \ref{prop:resolution}, Steps (\ref{step:1})
  and (\ref{step:2}) are performed within
  $\sigma^{\bigO(1)}s^{k+1}D^{\bigO(k)}$ bit operations.
\item {F}rom Proposition \ref{prop:rat0dim}, Step (\ref{step:3})
  requires $\sigma^{\bigO(1)}D^{\bigO(k)}$ bit operations. Moreover,
  if a rational point is outputted at Step (\ref{step:4}), its
  coordinates have bit length dominated by $\sigma D^{\bigO(k)}$.
\end{enumerate}
\end{proof}

We prove now the following result.
\begin{lemma}\label{lemma:intermediate}
  \begin{enumerate}
  \item\label{assert:1} Steps (\ref{step:a}-\ref{step:h}) require
    $\sigma^{\bigO(1)} (sD)^{\bigO(k^2)}$ bit operations. The number
    of polynomials in $\Phi'$ is $s$; their degrees are dominated by
    $D$ and the bit length of their coefficients is dominated by
    $\sigma D^{\bigO(k^2)}$.
  \item\label{assert:2} If a rational point with coordinates of bit
    length dominated by $\ell$ is returned at Steps
    (\ref{step:i}-\ref{step:j}), the rational number computed at Step
    (\ref{step:k}) has bit length dominated by $\ell+\sigma
    D^{\bigO(k^2)}$.
  \end{enumerate}
\end{lemma}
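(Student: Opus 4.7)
The plan is to track the standard four size parameters --- number of polynomials, number of free variables, degree, and coefficient bit length --- through the three costly building blocks of Steps (\ref{step:a})--(\ref{step:h}): the quantifier elimination in Step (\ref{step:b}) (via Theorem \ref{thm:qe}), the calls to {\sf SemiAlgebraicSolve} in Step (\ref{step:c}) (via Proposition \ref{prop:resolution}), and the linear substitution in Step (\ref{step:h}) (via a direct expansion argument).

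The formula $\Theta$ of Step (\ref{step:a}) has $k+1$ free variables $A_1,\ldots,A_k,B$, one ($\omega=1$) universally-quantified block of $k$ variables $Y_1,\ldots,Y_k$, and is built from $s+2$ atomic predicates --- the $s$ polynomials occurring in $\Phi$, together with $A_1Y_1+\cdots+A_kY_k-B$ and $A_1^2+\cdots+A_k^2$ --- of degree at most $D$ (assuming $D\ge 2$) with coefficients of bit length at most $\sigma$. Theorem \ref{thm:qe} then yields a disjunction $\bigvee_{i=1}^I\Psi_i$ with $I\le(sD)^{\bigO(k^2)}$, each $\Psi_i$ a conjunction of $J_i\le(sD)^{\bigO(k)}$ atoms whose polynomials have degree at most $D^{\bigO(k)}$ and coefficients of bit length at most $\sigma D^{\bigO(k^2)}$, all within $\sigma(sD)^{\bigO(k^2)}$ bit operations.

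Step (\ref{step:c}) then calls {\sf SemiAlgebraicSolve} on some $\Psi_i$; by Proposition \ref{prop:resolution} applied with $k+1$ variables, $(sD)^{\bigO(k)}$ polynomials of degree $D^{\bigO(k)}$ and bit length $\sigma D^{\bigO(k^2)}$, one call costs $\sigma^{\bigO(1)}(sD)^{\bigO(k^2)}$ bit operations and returns a rational parametrization ${\cal G}$ of degree $D^{\bigO(k^2)}$ whose coefficients have bit length $\sigma D^{\bigO(k^2)}$. The loop iterates at most $I=(sD)^{\bigO(k^2)}$ times, preserving this total bound. Steps (\ref{step:d})--(\ref{step:g}) merely inspect ${\cal G}$, extracting the vectors $(\mathbf{a}_i,\mathbf{b}_i)$, choosing $(a,b)$ and the pivot $r$, and writing down the linear polynomial $h$, whose coefficients inherit bit length $\sigma D^{\bigO(k^2)}$. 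In Step (\ref{step:h}), substituting the degree-one $h$ for $Y_r$ in each of the $s$ polynomials of $\Phi$ preserves the count and the degree bound $D$; each new coefficient is a sum of at most $\binom{k+D}{D}$ products of one original coefficient (bit length $\le\sigma$) with a coefficient of $h^j$ for some $j\le D$ (bit length $\le j\cdot\sigma D^{\bigO(k^2)}$), so after clearing denominators the bit length is bounded by $D\cdot\sigma D^{\bigO(k^2)}+\sigma=\sigma D^{\bigO(k^2)}$. This proves Assertion \ref{assert:1}.

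Assertion \ref{assert:2} is then immediate: at Step (\ref{step:k}), $q_r$ is the value of the linear form $h$ at a rational point with coordinates of bit length $\ell$, so putting the $k-1$ terms over a common denominator yields bit length dominated by $\ell+\sigma D^{\bigO(k^2)}$. The main obstacle is purely bookkeeping --- verifying that the two nested applications of Theorem \ref{thm:qe} and Proposition \ref{prop:resolution} do not inflate the $D$-exponent beyond $\bigO(k^2)$ at this recursion level. The saving grace is that $\Theta$ has only one block of quantifiers, so Theorem \ref{thm:qe} already produces $D^{\bigO(k)}$-degree atoms, which Proposition \ref{prop:resolution} raises only to an $\bigO(k)$-th power, giving exactly $D^{\bigO(k^2)}$.
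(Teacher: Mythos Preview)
Your proof is correct and follows essentially the same route as the paper's: apply Theorem~\ref{thm:qe} to bound the output of {\sf QuantifierElimination}, then Proposition~\ref{prop:resolution} to bound each call to {\sf SemiAlgebraicSolve} inside the loop, and finally track the coefficient growth through the linear substitution. Your write-up is in fact slightly more explicit than the paper's in counting the $s+2$ atoms of $\Theta$ and in analyzing the substitution at Step~(\ref{step:h}), but the structure and the key bounds are identical.
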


\begin{proof} {F}rom Theorem \ref{thm:qe}, Steps
  (\ref{step:a}-\ref{step:b}) are performed within $\sigma
  s^{\bigO(k^2)}D^{\bigO(k^2)}$ bit operations. The obtained
  quantifier-free formula is a disjunction of $(sD)^{\bigO(k^2)}$
  conjunctions. %lzhi2 was: conjonctions.
  Thus the loop (Step (\ref{step:while})) %lzhi2 delete: is
  makes at
  most $(sD)^{\bigO(k^2)}$ calls to {\sf SemiAlgebraicSolve}.  Each
   conjunctions %lzhi2 was: conjonction
   involves $(sD)^{\bigO(k)}$ polynomials of degree
  $D^{\bigO(k)}$ in $\Z[A_1, \ldots, A_k, B]$ with integers of bit
  length dominated by $\sigma D^{\bigO(k^2)}$.

  Thus, {f}rom Proposition \ref{prop:resolution}, Step (\ref{step:c})
  is performed within $\sigma^{\bigO(1)}(sD)^{\bigO(k^2)}$ bit
  operations and outputs a rational parametrization of degree
  $D^{\bigO(k^2)}$ with integer coefficients of bit length dominated
  by $\sigma D^{\bigO(k^2)}$. Thus, the integers in the list computed
  at Step (\ref{step:d}) have bit length dominated by $\sigma
  D^{\bigO(k^2)}$. This implies that the polynomial obtained from
  Steps (\ref{step:e}-\ref{step:g}) has rational coefficients of bit
  length dominated by $\sigma D^{\bigO(k^2)}$. Assertion
  (\ref{assert:2}) follows immediately.

  The bit complexity of these steps is obviously negligible compared
  to the cost of Step (\ref{step:c}). The substitution phase (Step
  \ref{step:h}) has a cost which is still dominated by the cost of
  Step (\ref{step:c}). As announced, the obtained formula $\Phi'$
  contains $s$ $(k-1)$-variate polynomials of degree $D$ with integer
  coefficients of bit length dominated by $\sigma D^{\bigO(k^2)}$.
\end{proof}

We prove now Proposition \ref{prop:complexite} by induction on $k$.
The initialization of the induction is immediate {f}rom Lemmata
\ref{lemma:k1} and \ref{lemma:initialisation}.

Suppose that $k>1$. Suppose that the execution of ${\sf
  FindRationalPoints}(\Phi)$ stops at Steps (\ref{step:6}), or
(\ref{step:2}) or (\ref{step:4}).  {F}rom Lemma
\ref{lemma:initialisation}, we are done. Suppose now that we enter in
Step (\ref{step:7}).

By Lemma \ref{lemma:intermediate}(\ref{assert:1}), the formula $\Phi'$
computed at Step (\ref{step:h}) contains $s$ $(k-1)$-variate
polynomials of degree $D$ and coefficients of bit length dominated by
$\sigma D^{\bigO(k^2)}$ and is obtained within $\sigma^{\bigO(1)}
(sD)^{\bigO(k^2)}$ bit operations. The induction assumption implies
that
\begin{itemize}
\item Step (\ref{step:i}) requires
  $\sigma^{\bigO(1)}(sD)^{\bigO(k^3)}$ bit operations, %lzhi2 add:,
\item If a rational point is contained in $L$ (Step (\ref{step:j})),
  its coordinates have bit length dominated by $\sigma
  D^{\bigO(k^3)}$.
\end{itemize}
Hence, by Lemma \ref{lemma:intermediate}(\ref{assert:2}), the rational
number computed at Step (\ref{step:k}) has bit length dominated by
$\sigma D^{\bigO(k^3)}$. Moreover, the cost of Steps
(\ref{step:k}-\ref{step:l}) is negligible compared to the cost of
previous steps.

\section{Rational sums of squares}\label{sec:sos}

Consider a polynomial $f\in \Z[x_1, \ldots, x_n]$ of degree $2d$
whose coefficients have bit length bounded by $\tau$. If we choose
$\mon$ as the vector of all monomials in $\Z[x_1, \ldots, x_n]$ of
degree less than or equal to
$d$, %lzhi2 was:$\leq d$,
then we consider the set of real symmetric matrices $M=M^T$ of
dimension $D={{n+d}\choose{n}}$ for which $f=\mon^T \cdot M \cdot
\mon $.  By Gaussian elimination, it follows that there exists an
integer $k\leq \frac{1}{2}D(D+1)-{{n+2d}\choose{n}}$ such that
\begin{equation}
M=\{ M_0+ Y_1 M_1+\ldots+Y_k M_k,  ~ Y_1, \ldots, Y_k \in \R \}
\end{equation}
for some rational symmetric matrices $M_0, \ldots, M_k$.  The
polynomial $f$ can be written as %lzhi2 add: as
 a sum of squares of polynomials %lzhi2 add: polynomials
 if and only if the
matrix $M$ can be completed as a symmetric positive semidefinte
matrix (see \cite{Monique01}).
 Let
$Y=(Y_1, \ldots, Y_k)$, we define
\begin{equation}\label{Yset}
{\cal S}=\{Y \in \R^k ~|~ M(Y) \succeq 0,  ~M(Y)=M(Y)^T, ~f=\mon^T
\cdot M(Y) \cdot \mon \}.
\end{equation}
It is clear that ${\cal S} \subseteq \R^k$ is a convex set defined
by setting all polynomials in
\begin{equation}
\Phi(Y_1, \ldots, Y_k) =\{(-1)^{(i+D)} m_i, ~i=0, \ldots, D-1\}
\end{equation}
to be nonnegative, where the $m_i$'s are the coefficients of the
characteristic polynomial of $M(Y)$. The cardinality $s$ of ${\Phi}$
is bounded by $D$ and ${\Phi}$ %lzhi2 add: ${\Phi}$
contains polynomials of degree
bounded by $D$ whose coefficients have bit length bounded by $\tau
D$ (see \cite{PW98}). Hence the semi-algebraic set defined by
(\ref{Yset}) is
\begin{equation}\label{Ysetm}
{\cal S} = \{ (Y_1, \ldots, Y_k) \in \R^k ~|~ (-1)^{(i+D)} m_i \geq
0, ~ 0 \leq i \leq D-1 \}.
\end{equation}

The result below is obtained by applying  Theorem \ref{thm:main} to
the semi-algebraic set defined above.

\begin{corollary}
  Let $f\in \Z[x_1, \ldots, x_n]$ of degree $2d$ %lzhi2 was:d
  with integers of bit
  length bounded by $\tau$.  By running the algorithm {\sf
    FindRationalPoints} for the semi-algebraic set defined in
  (\ref{Yset}), one can decide whether $f$ is a sum of squares in
  $\Q[x_1, \ldots, x_n]$ within $\tau^{\bigO(1)} D^{\bigO(k^3)}$ bit
  operations. Suppose $f=\sum f_i^2, f_i\in \Q[x_1, \ldots, x_n]$,
  then the bit lengths of rational coefficients of the
  $f_i$'s are bounded
  by $\tau D^{\bigO(k^3)}$.
\end{corollary}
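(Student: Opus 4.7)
The plan is to apply Theorem \ref{thm:main} to the convex semi-algebraic set ${\cal S}$ defined by (\ref{Yset}), and then convert the resulting rational feasible point into an SOS decomposition in $\Q[x_1,\ldots,x_n]$ via a rational $LDL^T$ decomposition. First I would verify the input parameters of Theorem \ref{thm:main} for the description (\ref{Ysetm}): ${\cal S}$ is cut out by the $s\leq D$ non-strict inequalities $(-1)^{i+D}m_i(Y)\geq 0$, where the $m_i(Y)$ are the coefficients of the characteristic polynomial of the symmetric pencil $M(Y)=M_0+Y_1M_1+\cdots+Y_kM_k$. Since the entries of $M(Y)$ are rational linear forms extracted from the linear system $f=\mon^T M(Y)\mon$, their bit length is $\bigO(\tau)$; hence the $m_i$'s are polynomials in $Y_1,\ldots,Y_k$ of degree at most $D$ with coefficients of bit length $\bigO(\tau D)$. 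Applying Theorem \ref{thm:main} with $\sigma=\bigO(\tau D)$ yields the cost bound $\tau^{\bigO(1)}D^{\bigO(k^3)}$ for deciding whether ${\cal S}\cap\Q^k\neq\emptyset$, and when non-empty a rational point with coordinates of bit length $\tau D^{\bigO(k^3)}$.

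Next I would establish the equivalence ``${\cal S}\cap\Q^k\neq\emptyset$ if and only if $f$ is a sum of squares in $\Q[x_1,\ldots,x_n]$''. The implication $\Leftarrow$ is immediate: any rational decomposition $f=\sum_i f_i^2$ produces a rational positive semidefinite Gram matrix, hence a rational point of ${\cal S}$. For the converse, let $y=(y_1,\ldots,y_k)\in{\cal S}\cap\Q^k$ be returned by {\sf FindRationalPoints}. Compute an $LDL^T$ factorization $M(y)=U^T\Delta U$ where $U$ is unit upper-triangular rational and $\Delta$ is diagonal with non-negative rational entries; this is standard linear algebra over $\Q$ running in time polynomial in $D$ and the bit length of $M(y)$, with output bit length polynomially bounded in the same parameters. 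Then
\begin{equation*}
f=\mon^T U^T\Delta U\mon=\sum_{i=1}^D \Delta_{ii}\bigl((U\mon)_i\bigr)^2.
\end{equation*}
Each non-negative rational $\Delta_{ii}$ is written as a sum of at most four rational squares via the Rabin--Shallit algorithm, which runs in time polynomial in the input bit length. Substituting gives $f$ as a sum of squares in $\Q[x_1,\ldots,x_n]$.

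Finally I would track the bit lengths: the coefficients of each $f_i$ are built from entries of $U$, entries of $\Delta$ rewritten via four squares, and coefficients of $\mon$, each of which is polynomially bounded in $D$ and in the bit length of $M(y)$. Since $M(y)$ has entries of bit length $\tau D^{\bigO(k^3)}$, the final bit length of the $f_i$'s remains $\tau D^{\bigO(k^3)}$, and the overall bit complexity stays within $\tau^{\bigO(1)}D^{\bigO(k^3)}$ since the post-processing (LDL, four-square) is polynomial-time. The main technical point is to verify that the characteristic-polynomial step truly produces coefficients of bit length $\bigO(\tau D)$ rather than something worse, so that the application of Theorem \ref{thm:main} gives exactly the claimed exponent in $k$; beyond this, every subsequent cost is absorbed by the dominant term coming from {\sf FindRationalPoints}.
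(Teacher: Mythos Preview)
Your proposal is correct and follows the same line as the paper: apply Theorem~\ref{thm:main} to the convex set~${\cal S}$ of~(\ref{Yset}) described via~(\ref{Ysetm}), using the bounds $s\leq D$, degree $\leq D$, and coefficient bit length $\leq \tau D$ that the paper records just before the corollary. The paper itself gives no further proof beyond the sentence ``The result below is obtained by applying Theorem~\ref{thm:main} to the semi-algebraic set defined above'', so you have simply made explicit what the paper leaves implicit---in particular the equivalence ${\cal S}\cap\Q^k\neq\emptyset\Leftrightarrow f$ is SOS over~$\Q$, and the post-processing via a rational $LDL^T$ factorization together with Lagrange four-squares to pass from a rational Gram matrix to actual polynomials~$f_i$ with the stated bit bound.
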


\begin{remark}
  Applying Proposition \ref{KhPo97} by Khachiyan and Porkolab to the
  semi-algebraic set defined in (\ref{Yset}), one can decide whether
  $f$ is a sum of squares in $\Z[x_1, \ldots, x_n]$ within
  $\tau^{\bigO(1)} D^{\bigO(k^4)}$ operations. Suppose that $f=\sum
  f_i^2, f_i\in \Z[x_1, \ldots, x_n]$, then the bit lengths of integer
  coefficients of $f_i$ are bounded by $\tau D^{\bigO(k^4)}$.
\end{remark}

Porkolab and Khachiyan showed that the non-emptiness of the convex set
defined in (\ref{Yset}) over the reals can be determined in $\bigO(k
D^4)+ D^{\bigO(\min\{k, D^2\})}$ arithmetic operations over $\ell
D^{\bigO(\min\{k, D^2\})}$-bit numbers, where $\ell$ is the maximal
bit length of the matrices $M_i$ (see \cite{PoKh97}).  Suppose ${\cal
  S} \neq \emptyset$, i.e., $f\in \Q[x_1, \ldots, x_n]$ is a sum of
$m$ squares in $\K[x_1, \ldots, x_n]$ where $\K$ is an algebraic
extension of $\Q$.  If $\K$ is a totally real number field, then $f$
is also a sum of squares in $\Q[x_1, \ldots, x_n]$, i.e, ${\cal S}
\bigcap \Q^n \neq \emptyset$ %lzhi2 was: \empty
 (see \cite{Hillar08,Kaltofen09}).  The following
lemma and proof can be deduced from arguments given in
\cite{Kaltofen09}.

\begin{lemma}
%lzhi2 rewrite
Suppose ${\cal G}=(G, G_0, G_1, \ldots, G_k)$ is  a rational
parametrization   for the semi-algebraic set $\cal S$ defined in
(\ref{Yset}) computed  by {\sf SemiAlgebraicSolve}. Suppose
$\vartheta$ is a real root of $G$ such that
\begin{equation}\label{rationalY}
Y(\vartheta)=\frac{1}{q} (G_1(\vartheta), G_2(\vartheta), \ldots,
G_k(\vartheta)) \in {\cal S},
\end{equation}
Then for any real root $\vartheta_i$ of $G$,  we have
\begin{equation}\label{rationalTY}
Y(\vartheta_i)=\frac{1}{q} (G_1(\vartheta_i), G_2(\vartheta_i),
\ldots, G_k(\vartheta_i)) \in {\cal S}.
\end{equation}
 Moreover, if the
polynomial $G$ has only real roots, then the point defined by
$\frac{1}{\deg{G}} \sum_{i=1}^{\deg{G}}  Y(\vartheta_i)$ is a
rational point in $\cal S$.
\end{lemma}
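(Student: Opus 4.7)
The plan is to prove the two assertions of the lemma in sequence, using Galois theory as the unifying tool. By Proposition~\ref{prop:resolution}, the polynomial $G$ output by {\sf SemiAlgebraicSolve} is irreducible over $\Q$, so all its roots in $\overline{\Q}$ form a single Galois orbit. For each real root $\vartheta_i$ of $G$, the map $\vartheta\mapsto \vartheta_i$ extends uniquely to a $\Q$-algebra isomorphism $\sigma_i\colon \Q(\vartheta)\to \Q(\vartheta_i)\subset \R$, which is a real embedding. Since the polynomials $G_1,\ldots,G_k$ and the matrices $M_0,M_1,\ldots,M_k$ all have rational entries, we have $\sigma_i(Y(\vartheta))=Y(\vartheta_i)$ and $\sigma_i(M(Y(\vartheta)))=M(Y(\vartheta_i))$ when $\sigma_i$ is applied entry-wise.

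For the first assertion I would show that the positive semidefiniteness of $M(Y(\vartheta))$ transfers through $\sigma_i$ to that of $M(Y(\vartheta_i))$. The equality conditions defining ${\cal S}$ (symmetry of $M(Y)$ and the identity $f=\mon^\top M(Y)\mon$) are $\Q$-linear constraints on $Y$, hence trivially preserved by $\sigma_i$. For the PSD condition, the strategy is to realize $M(Y(\vartheta))\succeq 0$ via a weighted sum-of-squares representation $M(Y(\vartheta))=\sum_j c_j\, u_j u_j^\top$ with $c_j\in\Q(\vartheta)$ positive in the order at $\vartheta$ and $u_j\in\Q(\vartheta)^D$, obtained by congruence diagonalization (Sylvester's law of inertia). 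Then $\sigma_i$ transports this to the identity $M(Y(\vartheta_i))=\sum_j \sigma_i(c_j)\,\sigma_i(u_j)\,\sigma_i(u_j)^\top$, and the Kaltofen-style arguments cited from \cite{Kaltofen09} supply the nonnegativity of the weights $\sigma_i(c_j)$ needed to conclude $M(Y(\vartheta_i))\succeq 0$, i.e.\ $Y(\vartheta_i)\in{\cal S}$.

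The second assertion then follows by averaging. Assuming $G$ has only real roots $\vartheta_1,\ldots,\vartheta_{\deg G}$, the first assertion gives $Y(\vartheta_i)\in{\cal S}$ for every $i$. The $j$-th coordinate of $\sum_i Y(\vartheta_i)$ equals $\frac{1}{q}\sum_i G_j(\vartheta_i)$, and since the quantities $\sum_i G_j(\vartheta_i)$ are symmetric polynomials in the roots of $G$, Newton's identities express them rationally in terms of the coefficients of $G$. Hence the barycenter $Y^\star=\frac{1}{\deg G}\sum_i Y(\vartheta_i)$ has rational coordinates, and by convexity of ${\cal S}$ it lies in ${\cal S}$.

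The main obstacle is the PSD-preservation step within the first assertion. A priori, $c>0$ in $\Q(\vartheta)$ under one real embedding does not force $\sigma_i(c)>0$ under another (for instance $1+\sqrt{2}>0$ while $1-\sqrt{2}<0$), so generic congruence diagonalization alone is not enough. The argument must exploit the specific structure of the parametrization ${\cal G}$ produced by {\sf SemiAlgebraicSolve}: the rank and signature of the pencil $M(Y(T))$ should be constant along $Z({\cal G})$, so that the spectral data at $\vartheta$ and at its Galois conjugates cannot change sign. Making this precise is the crux of the proof, and I would follow the corresponding treatment in \cite{Kaltofen09}.
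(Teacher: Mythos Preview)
Your approach is essentially the paper's: transport a factorization of $M(Y(\vartheta))$ through the $\Q$-isomorphism $\vartheta\mapsto\vartheta_i$ (using irreducibility of $G$ from Proposition~\ref{prop:resolution}), and for the second assertion average over all real roots, invoking symmetric functions of the roots for rationality and convexity of~$\mathcal{S}$ for membership.

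The divergence is in which factorization you carry across. The paper asserts that Gaussian elimination over $\Q(\vartheta)$ gives $M(Y(\vartheta))=A(\vartheta)^{T}A(\vartheta)$ with $A$ having entries in $\Q(\vartheta)$; then $M(Y(\vartheta_i))=A(\vartheta_i)^{T}A(\vartheta_i)$ is a Gram matrix of \emph{real} vectors and hence automatically PSD, with no sign condition left to verify. You instead take the $LDL^{T}$ (congruence-diagonalization) route and are then forced to track the signs of $\sigma_i(c_j)$. These two formulations are the same difficulty in different clothing: converting $LDL^{T}$ into $A^{T}A$ over $\Q(\vartheta)$ requires each pivot $c_j$ to be a sum of squares in $\Q(\vartheta)$, which is precisely total positivity of $c_j$ --- the very sign preservation you flag. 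The paper's proof does not argue this point beyond its attribution to \cite{Kaltofen09}, so you are right that something nontrivial is being invoked.

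Your proposed resolution, however, points in the wrong direction. Nothing in the specification of {\sf SemiAlgebraicSolve} (Proposition~\ref{prop:resolution}) forces the signature of $M(Y(T))$ to be constant along $Z({\cal G})$; the Thom-encodings in ${\cal T}$ explicitly allow for only \emph{some} real roots of $G$ to land in~$\mathcal{S}$. The mechanism the paper relies on is not a property of the sampling algorithm but the $A^{T}A$ form itself, and you should aim for that directly rather than for signature constancy of the pencil.
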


\begin{proof}
Since $Y(\vartheta) \in {\cal S}$, the matrix $M(Y(\vartheta))$ is
positive semidefinite. We can perform the Gaussian elimination over
$\Q(\vartheta)$ to obtain the decomposition
$M(Y(\vartheta))=A(\vartheta)^T A(\vartheta)$.  It is clear  that
for any real root $\vartheta_i$ of $G$,
$M(Y(\vartheta_i))=A(\vartheta_i)^T A(\vartheta_i)$ is also positive
semi-definite, i.e., $Y(\vartheta_i) \in {\cal S}$. Moreover, if $G$
has only real roots $\vartheta_i$, then $\sum_{\vartheta_i,
G(\vartheta_i)=0}G_j(\vartheta_i) \in \Q$.  It follows that the
point defined by $\frac{1}{\deg{G}} \sum_{i=1}^{\deg{G}}
Y(\vartheta_i)$ is a rational point in $\cal S$.

\end{proof}

The above discussion leads to the following result.

\begin{theorem}
Suppose $f\in \Z[x_1, \ldots, x_n]$.   There exists a function {\sf
RationalTotalRealSolve} which either determines that $f$ can not be
written as sum of squares over the reals  or returns a sum of
squares representation of $f$ over $\Q[x_1, \ldots, x_n]$ if and
only if
 the polynomial  $G$  outputted from the function {\sf
 SemiAlgebraicSolve} has only real solutions. The coordinates of
 the rational coefficients  of polynomials $f_i$ in $f= \sum_{i}  f_i^2$ have  bit length dominated by  $\tau D^{\bigO(k)}$
and the bit complexity of {\sf RationalTotalRealSolve} is
$\tau^{\bigO(1)}D^{\bigO(k)}$.

\end{theorem}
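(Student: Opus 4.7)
The plan is to run \textsf{SemiAlgebraicSolve} on the formula $\Phi$ from (\ref{Ysetm}), then combine the averaging idea of the preceding lemma with a rational diagonalisation of the resulting Gram matrix.

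\textbf{Step 1 (detect real infeasibility).} Apply \textsf{SemiAlgebraicSolve} to $\Phi$. If the list $\mathcal{T}$ of Thom-encodings is empty, then $\mathcal{S}=\emptyset$, so no real PSD Gram matrix exists for $f$ and $f$ is not an SOS over $\R$; return this verdict. Otherwise we obtain a rational parametrization $(G,G_0,G_1,\ldots,G_k)$ with $G$ irreducible over $\Q$, together with the corresponding point $Y(\vartheta)=\tfrac1q(G_1(\vartheta),\ldots,G_k(\vartheta))\in\mathcal{S}$ as in~(\ref{rationalY}).

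\textbf{Step 2 (reality test and averaging).} Count the real roots of $G$ by a Sturm sequence or by the signature of Hermite's quadratic form. If $G$ has a non-real root, halt without producing an SOS; this is the ``only if'' direction of the statement. If all $\delta=\deg(G)$ roots of $G$ are real, the lemma above gives the rational point
\[
y^{\star}=\frac{1}{\delta\,q}\Bigl(\sum_{i=1}^{\delta}G_{1}(\vartheta_{i}),\ldots,\sum_{i=1}^{\delta}G_{k}(\vartheta_{i})\Bigr)\in\mathcal{S}\cap\Q^{k}.
\]
Each coordinate $\sum_{i}G_{j}(\vartheta_{i})$ is a symmetric function of the roots of $G$; it equals the trace of multiplication-by-$G_{j}$ on $\Q[T]/(G)$ and is computable from the coefficients of $G$ and $G_{j}$ via Newton's identities.

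\textbf{Step 3 (from rational point to rational SOS).} Form $M^{\star}=M(y^{\star})=M_{0}+\sum_{j}y^{\star}_{j}M_{j}\in\Q^{D\times D}$, which is symmetric PSD. Perform a rational congruence diagonalisation $M^{\star}=L^{T}D L$ with $L\in GL_{D}(\Q)$ and $D=\mathrm{diag}(d_{1},\ldots,d_{r},0,\ldots,0)$, $d_{i}\in\Q_{>0}$, using Gaussian elimination; the positivity of the pivots follows from $M^{\star}\succeq 0$. For each $i$, write the positive rational $d_{i}$ as a sum of four rational squares $d_{i}=\sum_{j=1}^{4}c_{i,j}^{2}$ via Lagrange's theorem (reducing to the integer case through $d_{i}=p_{i}q_{i}/q_{i}^{2}$ and applying e.g.\ the Rabin--Shallit algorithm). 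Denoting by $\ell_{i}$ the $i$-th row of $L$, we obtain
\[
f=\mon^{T}M^{\star}\mon=\sum_{i=1}^{r}d_{i}(\ell_{i}^{T}\mon)^{2}=\sum_{i=1}^{r}\sum_{j=1}^{4}\bigl(c_{i,j}\,\ell_{i}^{T}\mon\bigr)^{2},
\]
which is the announced rational SOS decomposition.

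\textbf{Complexity and main obstacle.} With $s\le D$ and $\sigma=\bigO(\tau D)$ for $\Phi$, Proposition \ref{prop:resolution} gives Step~1 in $\tau^{\bigO(1)}D^{\bigO(k)}$ bit operations with $\deg(G)=D^{\bigO(k)}$ and coefficient bit length $\tau D^{\bigO(k)}$; the reality test, the trace/Newton computation for $y^{\star}$, the rational $LDL^{T}$, and the four-square decompositions are each polynomial in those sizes, giving the global bounds $\tau^{\bigO(1)}D^{\bigO(k)}$ and $\tau D^{\bigO(k)}$ claimed. The main delicate point is Step~3: a rational PSD matrix need not admit a rational Cholesky factorisation, so one must pass through diagonal congruence and absorb the positive rational pivots through Lagrange's four-square identity to guarantee that the resulting $f_{i}$ truly lie in $\Q[x_{1},\ldots,x_{n}]$ and not merely in $\R[x_{1},\ldots,x_{n}]$.
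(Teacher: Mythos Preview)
Your proof is correct and follows essentially the same approach as the paper, which presents the theorem as an immediate consequence of the preceding lemma and discussion without writing out a separate argument: your Steps~1--2 are exactly that lemma applied after {\sf SemiAlgebraicSolve}, and the complexity accounting matches Proposition~\ref{prop:resolution}. Your Step~3 (rational $LDL^{T}$ plus Lagrange four squares) makes explicit the passage from a rational PSD Gram matrix to an actual rational SOS, a point the paper leaves implicit.
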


\bibliographystyle{elsart-harv} \bibliography{rationalsos}

%\tableofcontents

\end{document}